\newtheorem{theorem}{Theorem}
\newtheorem{remark}[theorem]{\textit{Remark}}
\newenvironment{proof}[1][Proof]{\noindent\textbf{#1.} }{\ \rule{0.5em}{0.5em}}
\newenvironment{AMS}[1][AMS Subject Classification]{\noindent\textbf{#1:} }{}
\newenvironment{DOI}[1][DOI]{\noindent\textbf{#1:} }{}
\newenvironment{keywords}[1][Keywords]{\noindent\textbf{#1:} }{}
\def\trp{^T}
\def\Cbb{\mathbb{C}}
\def\Rbb{\mathbb{R}}
\def\Ncal{\mathcal{N}}
\def\diag{{\rm diag}}
\def\half{\frac{1}{2}}
\newcommand{\dst}{\displaystyle}
\title{Network Synthesis of Linear Dynamical Quantum Stochastic
Systems\thanks{Journal version published electronically September 18, 2009. URL: http://www.siam.org/journals/sicon/48-4/72865.html.}}
\author{Hendra I. Nurdin\thanks{Department of Information Engineering, Australian National University,
Canberra, ACT 0200, Australia (Hendra.Nurdin@anu.edu.au, Matthew.James@anu.edu.au). The
research of these authors was supported by the Australian Research Council.}
\and Matthew R. James\footnotemark[2]
\and Andrew C. Doherty\thanks{School of Physical Sciences, University of Queensland,
Queensland 4072, Australia (doherty@ physics.uq.edu.au).}}
\begin{document}
\maketitle

\begin{abstract}
The purpose of this paper is to develop a synthesis theory for linear dynamical quantum
stochastic systems that are encountered in linear quantum optics and in phenomenological
models of linear quantum circuits. In particular, such a theory will enable the systematic
realization of coherent/fully quantum linear stochastic controllers for quantum control,
amongst other potential applications. We show how general linear dynamical quantum
stochastic systems can be constructed by assembling an appropriate interconnection of
one degree of freedom open quantum harmonic oscillators and, in the quantum optics setting,
discuss how such a network of oscillators can be approximately synthesized or implemented in
a systematic way from some linear and nonlinear quantum optical elements. An example is also
provided to illustrate the theory.
\end{abstract}

\begin{keywords}
quantum networks, quantum network synthesis, quantum control,
linear quantum stochastic systems, linear quantum circuit theory
\end{keywords}

\begin{AMS}
93B50, 93B10, 93E03, 94C99, 81V80
\end{AMS}

\begin{DOI}
10.1137/080728652
\end{DOI}

\pagestyle{myheadings}
\thispagestyle{plain}
\markboth{H. I. NURDIN, M. R. JAMES, AND A. C. DOHERTY}{LINEAR QUANTUM STOCHASTIC
NETWORK SYNTHESIS}

\section{Background and motivation}\label{sec:introduction}
In recent years there has been an explosion of interest in exploitation of quantum mechanical
systems as a basis for new quantum technologies, giving birth to the field of quantum information
science. To develop quantum technologies, it has been recognized from early on that quantum
control systems will play a crucial role for tasks such as manipulating a quantum mechanical
system to perform a desired function or to protect it from external disturbances \cite{Feyn59,DM03}.
Moreover, recent advances in quantum and nanotechnology have provided a great impetus for research
in the area of quantum feedback control systems; e.g., see
\cite{VPB83,HW94a,DJ99,AASDM02,SGDM04,Mab08}.

\looseness=-1Perhaps just about the simplest and most tractable controller to design would be the linear
quantum controllers, and this makes them an especially attractive class of controllers. In this
class, one can have classical linear quantum controllers that  process only classical signals
which are obtained from a quantum plant by measurement of some plant output signals (e.g.,
\cite{Bel79,DJ99,EB05}), but more recently there has also been interest in fully quantum and
mixed quantum-classical linear controllers that are able to manipulate quantum signals
\cite{YK03b,JNP06,Nurd07,NJP07a,NJP07b}. In fact, an experimental realization of a fully
quantum controller in quantum optics has been successfully demonstrated in \cite{Mab08}. As noted
in that paper, the class of fully quantum controllers or {\em coherent-feedback controllers}, as they
are often known in the physics literature, presents genuinely new control-theoretic challenges
for quantum controller design. An important open problem raised in the works
\cite{JNP06,Nurd07,NJP07a,NJP07b} is how one would systematically build or implement a general,
arbitrarily complex, linear quantum controller, at least approximately, from basic quantum devices,
such as quantum optical devices. This problem can be viewed as a quantum analogue of the synthesis
problem of classical electrical networks (in this paper the qualifier ``classical'' refers broadly
to systems that are not quantum mechanical) that asks the question of how to build arbitrarily
complex linear electrical circuits from elementary passive and active electrical components such
as resistors, capacitors, inductors, transistors, op-amps, etc. Therefore, the quantum synthesis
problem is not only of interest for the construction of linear quantum stochastic controllers, but
also more broadly as a fundamental aspect of linear quantum circuit theory that arises, for example,
in quantum optics and when working with phenomenological models of quantum RLC circuits such as
described in \cite{YD84}, as well as in relatively new fields such as nanomechanical circuit
quantum electrodynamics \cite{TC06,SG08}.

A key result of this paper is a new synthesis theorem (Theorem \ref{th:synthesis}) that prescribes
how an arbitrarily complex linear quantum stochastic system can be decomposed into an interconnection
of basic building blocks of one degree of freedom open quantum harmonic oscillators and thus be
systematically constructed from these building blocks. In the context of quantum optics, we then
propose physical schemes for ``wiring up'' one degree of freedom open quantum harmonic oscillators
and the interconnections between them that are required to build a desired linear quantum stochastic
system, using basic quantum optical components such as optical cavities, beam splitters, squeezers,
etc. An explicit yet simple example that illustrates the application of the theorem to the synthesis
of a two degrees of freedom open quantum harmonic oscillator is provided.

\subsection{Elements of linear electrical network synthesis}\label{sec:background}
To motivate synthesis theory in the context of linear dynamical quantum systems, we start with a brief
overview of aspects of linear electrical network synthesis that are relevant for the current work.

As is well known, a classical (continuous time, causal, linear time invariant) electrical network
described by a set of (coupled) ordinary differential equations can be analyzed using various
representations, for example, with a frequency domain or transfer function representation, with
a modern  state space representation and, more recently, with a behavioral representation. It is
well known that the transfer function and state space representation are equivalent in the sense
that one can switch between one representation to the other for any given network. However,
although one can associate a unique transfer function representation to a state space representation,
the converse is not true: for a given transfer function there are infinitely many state space
representations. The state space representation can be made to be unique (up to a similarity
transformation of the state space matrices) by requiring that the representation be of minimal
order (i.e., the representation is both controllable and observable). The synthesis question in
linear electrical networks theory deals with the inverse scenario, where one is presented with
a transfer function or state space description of a linear system and would like to synthesize
or build such a system from various linear electrical components such as resistors, capacitors,
inductors, op-amps, etc. A particularly advantageous feature of the state space representation,
since it is given by a set of first order ordinary differential equations, is that it can be
inferred directly from the representation how the system can be {\em systematically} synthesized.
For example, consider the system below, given in a state space representation:
\begin{eqnarray}
\label{eq:ss-eg-1} \frac{dx(t)}{dt} &=&\left[\begin{array}{cc} 2 &
5 \\ -2 & -4
\end{array}\right]x(t)+ \left[\begin{array}{c} 1\\ 0.1
\end{array}\right]u(t),\\
y(t)&=&\left[\begin{array}{cc} 0 & 1 \end{array}\right]x(t) +
u(t),  \nonumber
\end{eqnarray}
where $x(t)$ is the state, $u(t)$ is the input signal, and $y(t)$ is the output signal. In an
electrical circuit, $u(t)$ could be the voltage at certain input ports of the circuit and $y(t)$
\begin{figure}[tbp!]
\centerline{\includegraphics[width=17pc]{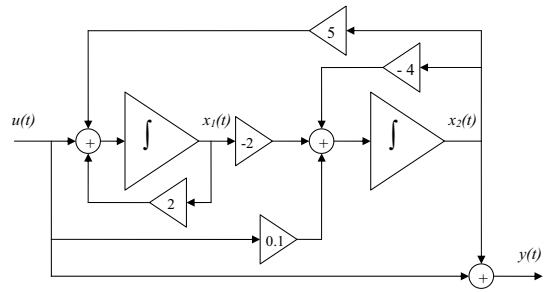}}
\caption{Schematic for the implementation of the classical system {\rm (\ref{eq:ss-eg-1})}.\label{fig:circ-diag}}
\end{figure}
\begin{figure}[tbp!]
\centerline{\includegraphics[width=19pc]{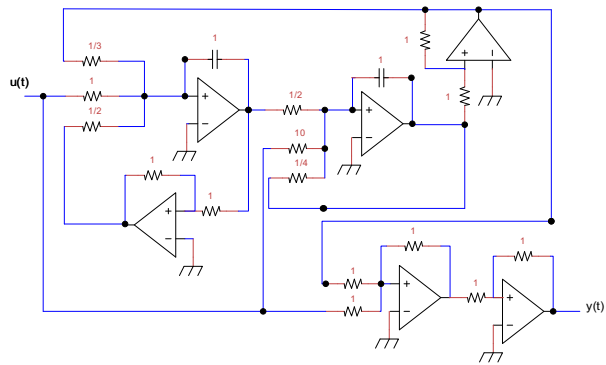}}
\caption{Hardware implementation of the schematic diagram shown in Figure
{\rm\ref{fig:circ-diag}}.\label{fig:circ-hw}}
\end{figure}
could be the voltage at another set of ports of the circuit, different from the input ports. This
system can be implemented according to the schematic shown in Figure \ref{fig:circ-diag}. This
schematic can then be used to to implement the system at the hardware level as shown in
Figure \ref{fig:circ-hw} \cite[Chapter 13]{AV73}. However, linear electrical network synthesis
is a mature subject that deals with much more than just how one can obtain {\em some} realization
of a particular system. For instance, it also addresses fundamental issues such as how a passive
network, a network that does not require an external source of energy,  can also be synthesized
using only passive electrical components, and how to synthesize a given circuit with a minimal
number of circuit elements or with a minimal number of certain types of elements (such as active
elements). In this paper our primary objective is to develop an analogously systematic method for
synthesizing arbitrarily complex linear dynamical {\em quantum} stochastic systems that are given
in an abstract description that is similar in form to (\ref{eq:ss-eg-1}). These linear dynamical
quantum stochastic systems are ubiquitous in linear quantum optics, where they arise as idealized
models for linear open quantum systems. However, since there is currently no comprehensive synthesis
theory available for linear dynamical quantum systems (as opposed to {\em static} linear quantum
systems in linear quantum optics that have been studied in, e.g., \cite{Leon03}) and related notions
such as passivity have not been extensively studied and developed, here we focus our attention
solely on the development of a {\em general} synthesis method that applies to {\em arbitrary} linear
dynamical quantum systems which does not exploit specific physical properties or characteristics
that a particular system may possess (say, for instance, passivity). Although the latter will be an
important issue to be dealt with in further development of the general theory, it is beyond the
scope of the present paper (which simply demonstrates the existence of {\em some} physical
realization).

\subsection{Open quantum systems and quantum Markov models}\label{sec1.2}
\looseness=-1A quantum system is never completely isolated from its environment and can thus interact with it.
Such quantum systems are said to be {\em open quantum systems} and are important in modeling
various important physical phenomena such as the decay of the energy of an atom. The environment
is modeled as a separate quantum system in itself and can be viewed as a {\em heat bath} to
which an open quantum system can dissipate energy or from which it can gain energy (see
\cite[Chapters 3 and 7]{GZ00}). An idealization often employed in modeling the interaction
between an open quantum system and an external heat bath is the introduction of a {\em Markovian}
assumption: the dynamics of the coupled system and bath is essentially ``memoryless'' in the sense
that future evolution of the dynamics of the coupled system  depends only on its present
state and not at all on its past states. Open quantum systems with such a property are said to
be {\em Markov}. The Markov assumption is approximately valid under some physical assumptions
made on the system and bath, such as that the heat bath is so much ``larger'' than the system
(in the sense that it has many more degrees of freedom than the system) and is weakly coupled
to the system that its interaction with the latter has little effect on its own dynamics and
can thus be neglected; for details on the physical basis for this Markovian assumption, see
\cite[Chapters 3 and 5]{GZ00}. Markov open quantum systems are important, as they are often
employed as very good approximations to various practically relevant open quantum systems,
particularly those that are encountered in the field of quantum optics, yet at the same time
are relatively more tractable to analyze as their dynamics can be written in terms of first
order operator differential equations.

In Markov open quantum systems, heat baths can be idealistically modeled as a collection of
a continuum of harmonic oscillators oscillating at frequencies in a continuum of values. An
important consequence of the Markov approximation in this model is that the heat bath can be
effectively treated in a quantum statistical sense as quantum noise \cite[section 3.3]{GZ00},
and thus Markov open quantum systems have inherently stochastic quantum dynamics that are most
appropriately described by quantum stochastic differential equations
(QSDE) \cite{GZ00,HP84,KRP92,BvHJ07}.
To be concrete, a single heat bath in the Markov approximation is formally modeled as an
operator-valued {\em quantum white noise} process $\eta(t)$, where $t \geq 0$ denotes time,
that satisfies the singular commutation relation $[\eta(t),\eta(t)^*]=\delta(t-t')$, where $^*$
denotes the adjoint of an operator, $\delta(t)$ is the Dirac delta function, and the commutator
bracket $[\cdot,\cdot]$ acts on operators $A$ and $B$ as $[A,B]=AB-BA$.  Examples of heat baths
that have been effectively modeled in such a way include vacuum noise, squeezed and laser
fields in quantum optics \cite{GZ00}, and infinitely long bosonic transmission lines \cite{YD84}.
See also \cite{YK03a} for a brief intuitive overview of the modeling of a free-traveling
quantized electromagnetic wave as quantum white noise. The formal treatment with quantum white
noises can be made mathematically rigorous by considering the bosonic annihilation process $A(t)$
(on a Fock space) that can be formally defined as the ``integral'' of
$\eta(t)$: $A(t)=\int_{0}^{t}\eta(s) ds$ and its adjoint process $A^*(t)=A(t)^*$. We shall refer
to the operator process $A(t)$ simply as a {\em bosonic field}. The celebrated Hudson--Parthasarathy  (H-P)
stochastic calculus provides a framework for working with differential equations involving the
processes $A(t)$ and $A^*(t)$, as well as another fundamental process on a Fock space called the
gauge process, denoted by $\Lambda(t)$, that models scattering of the photons of the bosonic heat
bath (at a formal level, one could write $\Lambda(t)=\int_{0}^{t} \eta(s)^*\eta(s) dt$). More
generally, a quantum system can be coupled to several independent bosonic fields
$A_1(t),\ldots,A_n(t)$,  with $A_j(t)=\int_{0}^{t} \eta_j(s) ds$, and in this case there can be
scattering between different fields modeled by interfield gauge processes
$\Lambda_{jk}(t)=\int_{0}^{t} \eta_j(s)^* \eta_k(s)ds$ (in interfield scattering, a photon is
annihilated in one field and then created in another).

\subsection{Linear dynamical quantum stochastic systems}\label{sec1.3}
Linear dynamical quantum stochastic systems (e.g., see \cite{EB05,JNP06}) arise in practice
as idealized models of {\em open} quantum harmonic oscillators whose canonical position and
momentum operators are {\em linearly} coupled to one or more external (quantum) heat baths
(the mathematical modeling involved is discussed in section \ref{sec:models}). Here a quantum
harmonic oscillator is a quantized version of a classical harmonic oscillator in which the
classical position and momentum variables $q_c$ and $p_c$, respectively, are replaced by
operators $q$, $p$ on an appropriate Hilbert space (in this case the space $L^2(\Rbb)$)
satisfying the canonical commutation relations (CCR) $[q,p]=2i$. It is said to be open if
it is interacting with elements of its environment. For instance, consider the scenario in
\cite{DJ99} of an atom trapped in an optical cavity. The light in the cavity is strongly
coupled to the atomic dipole, and as the atom absorbs and emits light, there are random
mechanical forces on the atom. In an appropriate parameter regime, the details of the optical
and atomic dipole dynamics are unimportant, and the optical field can be modeled as an
environment for the atomic motion. Under the assumptions of \cite{DJ99} the ``motional
observables'' of  the trapped atom  (its position and momentum operators) can then be treated
like those of  an open quantum harmonic oscillator. Linear Markov open quantum models are
extensively employed in various branches of physics in which the Markov type of arguments and
approximations such as discussed in the preceding subsection can be justified. They are
particularly prominent in quantum optics, but have also been used, among others, in
phenomenological modeling of quantum RLC circuits \cite{YD84}, in which the dissipative heat
baths are realized by infinitely long transmission lines attached to a circuit. For this
reason, the general synthesis results developed herein (cf.\ Theorem \ref{th:synthesis}) are
anticipated to be be relevant in various branches of quantum physics that employ linear
Markov models. For example, it has the potential of playing an important role in the
systematic and practical design of complex linear photonic circuits as the technology
becomes feasible.

A general linear dynamical quantum stochastic system is simply a many degrees of freedom
open quantum harmonic oscillator with several pairs of canonical position and momentum operators
$q_k,p_k$, with $k$ ranging from 1 to $n$, where $n$ is the number of degrees of freedom of the
system, satisfying the (many degrees of freedom) CCR $[q_j,p_k]=2i\delta_{jk}$ and
$[q_j,q_k]=[p_j,p_k]=0$, where $\delta_{jk}$ is the Kronecker delta which takes on the value
$0$ unless $j=k$, in which case it takes on the value 1, that is linearly coupled to a number of
external bosonic fields $A_1,\ldots,A_m$. In the interaction picture with respect to the field
and oscillator dynamics, the operators $q_j,p_j$ evolve unitarily in time as $q_j(t),p_j(t)$
while preserving the CCR $[q_j(t),p_k(t)]=2i\delta_{jk}$ and $[q_j(t),q_k(t)]=[p_j(t),p_k(t)]=0$
$\forall t \geq 0$, and the dynamics of the oscillator is given by (here
$x(t)=(q_1(t),p_1(t),\ldots,q_n(t),p_n(t))^T$ and $A(t)=(A_1(t),\ldots,A_m(t))^T$)
\begin{eqnarray}
dx(t)&=&A x(t) dt + B \left[ \begin{array}{c} dA(t) \\ dA(t)^*\end{array}\right],\nonumber\\
dy(t)&=&C x(t) dt + D dA(t), \label{eq:q-ss}
\end{eqnarray}

\noindent where $A \in \Rbb^{n \times n}$, $B \in \Cbb^{n \times 2m}$, $C \in \Cbb^{m \times n}$, and
$D \in \Cbb^{m \times m}$. Here the variable $y(t)$ acts as the output of the system due to
interaction of the bosonic fields with the oscillator; a component $y_j(t)$ of $y(t)$ is the
transformed version of the field $A_j(t)$ that results {\em after} it interacts with the
oscillator. Hence, $A_j(t)$ can be viewed as an {\em incoming} or {\em input} field, while
$y_j(t)$ is the corresponding {\em outgoing} or {\em output} field. To make the discussion more
concrete, let us consider a well-known example of a linear quantum stochastic system in quantum
optics: an optical cavity (see section \ref{sec:opt-cav} for further details of this device),
shown in Figure \ref{fig:fabry-perot}. The cavity depicted in the picture is known as a standing
\begin{figure}[tbp!]
\centerline{\includegraphics[width=10.5pc]{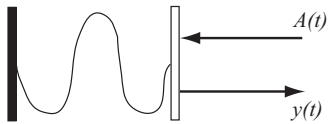}}
\caption{A Fabry--Perot optical cavity. The black rectangle denotes a mirror fully reflecting at
the cavity resonance frequency, while the white rectangle denotes a mirror partially transmitting
at that frequency.\label{fig:fabry-perot}}
\end{figure}
wave or Fabry--Perot cavity and consists of one fully reflecting mirror at the cavity resonance
frequency and one partially transmitting mirror. Light that is trapped inside the cavity forms a
standing wave with an oscillation frequency of $\omega_{cav}$, while parts of it leak through
the partially transmitting mirror. The loss of light through this mirror is modeled as an
interaction between the cavity with an incoming bosonic field $A(t)$ in the vacuum state
(i.e., a field with zero photons or a zero-point field) incident on the mirror. The dynamics
for a cavity is linear and given by
\begin{eqnarray*}
dx(t)&=&-\frac{\gamma}{2} x(t) dt -\sqrt{\gamma}dA(t),\\
dy(t)&=&\sqrt{\gamma} x(t) dt dt + dA(t),
\end{eqnarray*}
where $\gamma>0$ is the coupling coefficient of the mirror, $x(t)=(q(t),p(t))^T$ are the
interaction picture position and momentum operators of the standing wave inside the cavity, and
$y(t)$ is the outgoing bosonic field that leaks out of the cavity. A crucial point to notice
about (\ref{eq:q-ss}) is that it is in a similar form to the classical deterministic state space
representation such as given in (\ref{eq:ss-eg-1}), with the critical exception that
(\ref{eq:q-ss}) is a (quantum) stochastic system (due to the quantum statistical interpretation
of $A(t)$) and involves quantities which are operator-valued rather than real/complex-valued.
Furthermore, the system matrices $A,B,C,D$ in (\ref{eq:q-ss}) cannot take on arbitrary values for
(\ref{eq:q-ss}) to represent the dynamics of a physically meaningful system (see \cite{JNP06} and
\cite[Chapter 7]{Nurd07} for further details). For instance, for arbitrary choices of $A,B,C,D$ the
many degrees of freedom CCR may not be satisfied for all $t \geq 0$ as required by quantum mechanics;
hence these matrices cannot represent a physically feasible system. In \cite{JNP06,Nurd07}, the
notion of {\em physically realizable} linear quantum stochastic systems has been introduced that
corresponds to open quantum harmonic oscillators (hence are physically meaningful), which do not
include scattering processes among the bosonic fields. In particular, necessary and sufficient
conditions have been derived on the matrices $A,B,C,D$ for a system of the form (\ref{eq:q-ss})
to be physically realizable. More generally, however, are linear quantum stochastic systems that
are completely described and parameterized by three (operator-valued) parameters: its Hamiltonian
$H=\half x^T R x$ ($R \in \mathbb{R}^{n \times n}$, $R=R^T$), its linear coupling operator to the
external bosonic fields $L=K x$ ($K \in \Cbb^{m \times n}$), and its {\em unitary} scattering matrix
$S \in \Cbb^{m \times m}$. In particular, when there is no scattering involved ($S=I)$, then it has
been shown in \cite{JNP06} that $(S,L,H)$ can be recovered from $(A,B,C,D)$ (since $S=I$, here
necessarily $D=I$) and vice-versa. Although \cite{JNP06} does not consider the scattering processes,
the methods and results therein can be adapted accordingly to account for these processes (this is
developed in section \ref{sec:param-corspnds} of this paper).

The works \cite{JNP06,Nurd07} were motivated by the problem of the design of robust fully quantum
controllers and left open the question of how to systematically build arbitrary linear quantum
stochastic controllers as a suitable network of basic quantum devices. This paper addresses this
open problem by developing synthesis results for general linear quantum stochastic systems
for applications that are anticipated to extend beyond fully quantum controller synthesis, and it
also proposes how to implement the synthesis in quantum optics. The organization of the rest of
this paper is as follows. Section \ref{sec:models} details the mathematical modeling of linear
dynamical quantum stochastic systems and defines the notion of an open oscillator and a generalized
open oscillator, section \ref{sec:concat-ser-red-net} gives an overview of the notions of the
concatenation and series product for generalized open oscillators as well as the concept of a
reducible quantum network with respect to the series product, and section \ref{sec:param-corspnds}
discusses the bijective correspondence between two descriptions of a linear dynamical quantum
stochastic system. This is then followed by section \ref{sec:syn-theory} that develops the main
synthesis theorem which shows how to decompose an arbitrarily complex linear dynamical quantum
stochastic system as an interconnection of simpler one degree of freedom generalized open oscillators,
section \ref{sec:system-synth} that proposes the physical implementation of arbitrary one degree of
freedom generalized open oscillators and direct interaction Hamiltonians between these oscillators,
and section \ref{sec:example} that provides an explicit example of the application of the main
synthesis theorem to the construction of a two degrees of freedom open oscillator. Finally,
section \ref{sec:conclude} provides a summary of the contributions of the paper and conclusions.

\section{Mathematical modeling of linear dynamical quantum stochastic systems}\label{sec:models}
In the previous works \cite{EB05,JNP06} linear dynamical quantum stochastic systems were essentially
considered as open quantum harmonic oscillators. Here we shall consider a more general class of
linear dynamical quantum stochastic systems consisting of the cascade of a static passive linear
quantum network with an open quantum harmonic oscillator. However, in this paper we restrict our
attention to synthesis of linear systems with purely quantum dynamics, whereas the earlier work
\cite{JNP06} considers a more general scenario where a mixture of both quantum and classical
dynamics is allowed (via the concept of an augmentation of a quantum linear stochastic system).
The class of mixed classical and quantum controllers will be considered in a separate work. To
this end, let us first recall the definition of an open quantum harmonic oscillator (for further
details, see \cite{JNP06,EB05,Nurd07}).

In this paper we shall use the following notations: $i=\sqrt{-1}$, $^*$ will denote the adjoint of
a linear operator as well as the conjugate of a complex number, if $A=[a_{jk}]$ is a matrix of
linear operators or complex numbers, then $A^{\#}=[a_{jk}^*]$, and $A^{\dag}$ is defined as
$A^{\dag}=(A^{\#})^T$, where $^T$ denotes matrix transposition. We also define $\Re\{A\}=(A+A^{\#})/2$
and $\Im\{A\}=(A-A^{\#})/2i$ and denote the identity matrix by $I$ whenever its size can be inferred
from context and use $I_{n \times n}$ to denote an $n \times n$ identity matrix.

Let $q_1,p_1,q_2,p_2,\ldots,q_n,p_n$ be the canonical position and momentum operators, satisfying
the canonical commutation relations $[q_j,p_k]=2i\delta_{jk},\,[q_j,q_k]=0,\,[p_j,p_k]=0$ of a
{\em quantum harmonic oscillator} with a quadratic Hamiltonian $H=\half x_0^T R x_0$
($x_0=(q_1,p_1,\ldots,q_n,p_n)^T$), where $R=R^T \in \Rbb$. The integer $n$ will be referred to as
the {\em degrees of freedom} of the oscillator. Digressing briefly from the main theme of this
section, let us first discuss why the matrix $R$ can be taken to be real (symmetric) rather than
complex (Hermitian). Consider a general quadratic Hamiltonian $H$  of the form
$H=\half \sum_{j=1}^{n} (\alpha_{j} q_j^2+ \beta_{j}q_j p_j+\gamma_{j} p_j q_j+\epsilon_j p_j^2)
+  \sum_{j=1}^{n-1} \sum_{k=j+1}^n  \kappa_{jk} q_j p_k$, with
$\alpha_j,\epsilon_j,\kappa_{jk} \in \Rbb$, $\beta_j,\gamma_j \in \Cbb$, and
$\beta_j^*=\gamma_j\ \forall\ j,k$, since $H$  must be self-adjoint. Using the commutation relations
for the canonical operators, we can then write $H=\half \sum_{j=1}^{n} (\alpha_{j} q_j^2 +
\Re \{ \beta_{j} \} (q_j p_j + p_j q_j) + \epsilon_j p_j^2) + \sum_{j=1}^{n-1} \sum_{k=j+1}^n
\kappa_{jk} q_j p_k + c = \half x_0^T R x_0 + c$ for some real symmetric matrix $R$ and a real
number $c=-2\sum_{j=1}^{n}\Im\{\beta_j\}$. Since $c$  contributes only a phase factor $e^{ic}$
that has no effect on the dynamics of the oscillator,  as $e^{i(H-c)t}x_0e^{-i(H-c)t}
=e^{iHt}x_0e^{-iHt}\,\forall t \geq 0$, we may as well just discard $c$ and take $H$ to be
$H=\half x_0^T R x_0$ (i.e., the original $H$ without the  constant term). Returning to the main
discussion, let $\eta_1,\ldots,\eta_m$ be independent vacuum quantum white noise processes satisfying
the commutation relations $[\eta_j(t),\eta_k(t')^*]=\delta_{jk}\delta(t-t')$ and
$[\eta_j(t),\eta_k(t')]=0$ $\forall j,k$ and $\forall t,t' \geq 0$, and define
$A_j(t)=\int_{0}^t \eta_j(s)ds$ ($j=1,\ldots,m$) to be vacuum bosonic fields satisfying the quantum
Ito multiplication rules \cite{HP84,KRP92}
$$
dA_j(t)dA_{k}^*(t)=\delta_{jk} dt,\quad dA_j^*(t)dA_{k}(t)=0,\quad dA_j(t)dA_k(t)=0,\quad dA_j^*(t)dA_k^*(t)=0,
$$
with all other remaining second order products between $dA_j$, $dA_k^*$ and $dt$ vanishing. An
{\em open quantum harmonic oscillator}, or simply an {\em open oscillator}, is defined as a
quantum harmonic oscillator coupled to $A(t)$ via the formal time-varying {\em idealized} interaction
Hamiltonian \cite[Chapter 11]{GZ00}
\begin{equation}
H_{Int}(t)=i(L^T \eta(t)^*-L^{\dag}\eta(t)), \label{eq:ideal-int}
\end{equation}
where $L$ is a {\em linear} coupling operator given by $L=K x_0$ with $K \in \Cbb^{m \times n}$ and
$\eta=(\eta_1,\ldots,\eta_m)^T$. Although the Hamiltonian is formal since the $\eta_j(t)$'s are singular
quantum white noise processes, it can be given a rigorous interpretation in terms of Markov limits (e.g.,
\cite{Goug06a},\cite[Chapter 11]{GZ00}). The evolution of the open oscillator is then governed by the
unitary process $\{U(t)\}_{t \geq 0}$ satisfying the QSDE
\cite{EB05,JNP06,Fagno90,GZ00}
\begin{eqnarray}
dU(t)&{=}& \left(-iHdt+dA(t)^{\dag}L-L^{\dag} dA(t)-
\frac{1}{2}L^{\dag}L dt \right)U(t);\qquad U(0)=I. \label{eq:qsde-1}
\end{eqnarray}
The time-evolved canonical operators are given by $x(t)=U(t)^*x_0 U(t)$ and satisfy the QSDE
\begin{eqnarray*}
dx(t)&=&2\Theta(R+\Im\{K^{\dag}K\})x(t)dt+ 2i\Theta
[\begin{array}{cc} -K^{\dag} &
K^T\end{array}]\left[\begin{array}{c} dA(t)
\\ dA(t)^{\#} \end{array}\right],\\
 x(0)&=&x_0,
\end{eqnarray*}
where $\Theta$ is a canonical commutation matrix of the form $\Theta=\diag(J,J,\ldots,J)$, with
$$
J=\left[\begin{array}{cc}0 & 1
\\-1 & 0 \end{array}\right],
$$
while the output bosonic fields $Y(t)=(Y_1(t),\ldots,Y_n(t))^T$ that result from interaction of
$A(t)$ with the harmonic oscillator are given by $Y(t)=U(t)^{*}A(t)U(t)$ and satisfy the QSDE
\begin{eqnarray}
dY(t)&=& K x(t)dt+ dA(t).\label{eq2.3}
\end{eqnarray}
Note that the dynamics of $x(t)$ and $Y(t)$ are linear.

The input $A(t)$ of an open oscillator can first be passed through a static passive linear (quantum)
network (for example, a static passive linear optical network. See, e.g., \cite{LN04,Leon03} for
details) without affecting the linearity of the overall system dynamics; this is shown in
Figure \ref{fig:gen-osc}. Such an operation effects the transformation $A(t) \mapsto \tilde
A(t)=SA(t)$, where $S \in \Cbb^{m \times m}$ is a complex unitary matrix (i.e., $S^{\dag}S=SS^{\dag}=I$).
Thus $\tilde A(t)$ will be a new set of vacuum noise fields satisfying the same Ito rule as $A(t)$.

\begin{figure}[tbp!]
\centerline{\includegraphics[width=15pc]{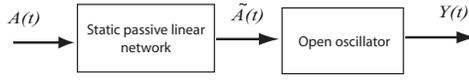}}
\caption{A generalized open oscillator.\label{fig:gen-osc}}
\end{figure}

Letting $S=[S_{jk}]_{j,k=1,\ldots,m}$, it can be shown by straightforward calculations using the
quantum Ito stochastic calculus that the cascade is equivalent (in the sense that it produces the
same dynamics for $x(t)$ and the output $y(t)$ of the system) to a linear quantum system whose
dynamics is governed by a unitary process $\{\tilde U(t)\}_{t \geq 0}$ satisfying the QSDE (for
a general treatment, see \cite{GJ07})
\begin{eqnarray}
d\tilde U(t) &=&
\left(\sum_{j,k=1}^{m}(S_{jk}-\delta_{jk})d\Lambda_{jk}(t)-iHdt
+dA(t)^{\dag}L- L^{\dag}SdA(t)\right. \label{eq:qsde-2}\\
&&\quad \left.
\vphantom{\sum_{j,k=1}^{m}(S_{jk}-\delta_{jk})d\Lambda_{jk}(t)}
-\;\frac{1}{2}L^{\dag}Ldt\right)\tilde U(t);\tilde U(0)=I,\nonumber
\end{eqnarray}
where $\Lambda_{jk}(t)$ ($j,k=1,\ldots,m$) are fundamental processes, called the gauge processes,
satisfying the quantum Ito rules
$$
d\Lambda_{jk}(t)
d\Lambda_{j'k'}(t)\hspace*{-1pt}=\hspace*{-1pt}\delta_{kj'}d\Lambda_{jk'}(t),\hspace*{6pt}dA_j(t)
d\Lambda_{kl}(t)\hspace*{-1pt}=\hspace*{-1pt}\delta_{jk}dA_l(t),\hspace*{6pt}d\Lambda_{jk} dA_l(t)
^*\hspace*{-1pt}=\hspace*{-1pt}\delta_{kl}dA_j^*(t),
$$
with all other remaining second order products between $d\Lambda_{jl}(t)$ and
$dA_{j'}(t),dA_{l'}^*(t),dt$ vanishing. This yields the following dynamics for
$x(t)=\tilde U(t)^*x_0 \tilde U(t)$ and the system output
$y(t)=\tilde U(t)^*A(t)\tilde U(t)$:
\begin{eqnarray}
dx(t) &=& A x(t)dt + B\left[\begin{array}{c} dA(t) \\ dA(t)^{\#}
\end{array}\right], \label{eq:goo-dyn-1}\\
dy(t) &=& Cx(t)dt +D dA(t), \label{eq:goo-dyn-2}
\end{eqnarray}
with
\begin{eqnarray}
\label{eq:goo-coeffs} A&=& 2\Theta(R+\Im\{K^{\dag}K\}),\\
B&=& 2i\Theta [\begin{array}{cc} -K^{\dag}S &
K^TS^{\#}\end{array}], \nonumber \\
C&=& SK, \nonumber \\
D&=& S.  \nonumber
\end{eqnarray}
For convenience, in the remainder of the paper we shall refer to the cascade of a static passive
linear quantum network with an open oscillator as a {\em generalized open oscillator}.

Let $G$ be a generalized open oscillator that evolves according to the QSDE (\ref{eq:qsde-2}) with
given parameters $S$, $L=Kx_0$, and $H=\half x_0^T Rx_0$. For compactness, we shall use a shorthand
notation of \cite{GJ07} and denote such a generalized open oscillator by $G=(S,L,H)$. In the next
section we briefly recall the concatenation and series product developed in \cite{GJ07} that allows
one to systematically obtain the parameters of a generalized open oscillator built up from an
interconnection of generalized open oscillators of one degree of freedom.

\section{The concatenation and series product of generalized open oscillators and reducible
quantum networks}\label{sec:concat-ser-red-net}
In this section we will recall the formalisms of concatenation product, series product, and
reducible networks (with respect to the series product) developed in \cite{GJ07} for the
manipulation of networks of generalized open oscillators as well as more general Markov open
quantum systems.

Let $G_1= (S_1,K_1x_{1,0},\half x_{1,0}^T R_1 x_{1,0})$ and
$G_2=(S_2,K_2x_{2,0},\half x_{2,0}^T R_2x_{2,0})$ be two generalized open oscillators, where
$x_{k,0}=x_k(0)$. The concatenation product $G_1 \boxplus G_2$ of $G_1$ and $G_2$ is defined
as
$$
G_1 \boxplus G_2= \left(S_{1\boxplus 2},(K_{1}x_{1,0},
K_2x_{2,0})^T,\half x_{1,0}^T R_{1}x_{1,0}+\half x_{2,0}^T R_{2}x_{2,0}\right),
$$
where
\begin{eqnarray*}
S_{1\boxplus2}&=&\left[\begin{array}{cc} S_1 & 0 \\ 0 &
S_2\end{array}\right].
\end{eqnarray*}
It is important to note here that the possibility that $x_{1,0}=x_{2,0}$ or that some components of $x_{1,0}$
coincide with those of $x_{2,0}$ are allowed. If $G_1$ and $G_2$ are independent oscillators (i.e., the
components of $x_{1,0}$ act on a distinct Hilbert space to that of the components of $x_{2,0}$), then the
concatenation can be interpreted simply as the ``stacking'' or grouping of the variables of two
noninteracting generalized open oscillators to form a larger generalized open oscillator.

It is also possible to feed the output of a system $G_1$ to the input of system $G_2$, with the
proviso that $G_1$ and $G_2$ have the same number of input and output channels. This operation of
cascading or loading of $G_2$ onto $G_1$ is represented by the series product $G_2 \triangleleft G_1$
defined by
\begin{eqnarray*}
G_2 \triangleleft G_1 &=&\bigg(S_2S_1,K_2x_{2,0}+S_2K_1x_{1,0},\half x_{1,0}^TR_1x_{1,0} \\
&&\hspace*{11pt}+\; \half x_{2,0}^TR_2x_{2,0}+\frac{1}{2i}x_{2,0}^T(K_2^{\dag}
S_2K_1-K_2^TS_2^{\#}K_1^{\#})x_{1,0}\bigg).
\end{eqnarray*}
Note that $G_2 \triangleleft G_1$ is again a generalized open oscillator with a scattering matrix,
coupling operator, and Hamiltonian as given by the above formula.

With concatenation and series products having been defined, we now come to the important notion of
a {\em reducible network with respect to the series product} (which we shall henceforth refer to
more simply as just a {\em reducible network}) of generalized open oscillators. This network consists
of $l$ generalized open oscillators $G_k=(S_k,L_k,H_k)$, with $L_k=K_k x_{k,0}$ and
$H_k=\half x_{k,0}^T R_k x_{k,0}$, $k=1,\ldots,l$, along with the specification of a direct interaction
Hamiltonian $H^d= \sum_{j}\sum_{k=j+1}x_{j,0}^T R_{jk} x_{k,0}$ ($R_{jk}\in \Rbb^{2 \times 2}$) and a list
$\mathcal{S}=\{G_k \triangleleft G_j\}$ of series connections among generalized open oscillators $G_j$
and $G_k$, $j \neq k$, with the condition that each input and each output has at most one connection,
i.e., lists of connections such as $\{G_2\triangleleft G_1,G_3\triangleleft G_2,G_1\triangleleft
G_3\}$ are disallowed. Such a reducible network $\Ncal$ again forms a generalized open oscillator and
is denoted by $\Ncal=\{\{G_k\}_{k=1,\ldots,l},H^d,\mathcal{S}\}$. Note that if $\Ncal_0$ is a reducible
network defined as $\Ncal_0=\{\{G_k\}_{k=1,\ldots,l},0,\mathcal{S}\}=(S_0,L_0,H_0)$, then $\Ncal$, which
is $\Ncal_0$ equipped with the direct interaction Hamiltonian $H^d$, is simply given by
$\Ncal=\Ncal_0 \boxplus (0,0,H)=(S_0,L_0,H_0+H^d)$.

The notion of a reducible network was introduced in \cite{GJ07} to study networks that are free of
``algebraic loops'' such as when connections like $\{G_2\triangleleft G_1,G_3\triangleleft
G_2,G_1\triangleleft G_3\}$ are present. The theory in \cite{GJ07} is not sufficiently general to
treat networks with algebraic loops; they can be treated in the more general framework of quantum
feedback networks developed in \cite{GJ08}. Since this work is  based on \cite{GJ07}, we also restrict
our attention to reducible networks, but as we shall show in section \ref{sec:syn-theory} this is
actually sufficient to develop a network synthesis theory of linear quantum stochastic systems.
A network synthesis theory can indeed also be developed using the theory of quantum feedback networks
of \cite{GJ08}, and this has been pursued in a separate work \cite{Nurd09b}.

Two important decompositions of a generalized open oscillator based on the series product that will
be exploited in this paper are
\begin{eqnarray}
(S,L,H)&=&(I,L,H) \triangleleft (S,0,0), \label{eq:decomp-1}\\
(S,L,H)&=&(S,0,0) \triangleleft (I,S^{\dag}L,H)
\label{eq:decomp-2},
\end{eqnarray}
where $(S,0,0)$ represents a static passive linear network implementing the unitary matrix $S$.

\section{Correspondence between system matrices \boldmath $(A,B,C,D)$ and the parameters
$S,L,H$\unboldmath}\label{sec:param-corspnds}
In \cite{JNP06} it has been shown that for $S=I$, then $D=I$, and there is a bijective correspondence
between the system matrices $(A,B,C,I)$ of a physically realizable linear quantum stochastic system
\cite[section III]{JNP06} and the parameters $K,R$ of an open oscillator; see Theorem 3.4 therein
(however, note that the $B$, $C$, and $D$ matrices are defined slightly differently from here because
\cite{JNP06} expresses all equations in terms of quadratures of the bosonic fields rather than their
modes). Here we shall show that allowing for an arbitrary complex unitary scattering matrix $S$, a
bijective correspondence between the system parameters $(A,B,C,D)$ of an extended notion of a
physically realizable linear quantum stochastic system and the parameters $S,K,R$ of  a generalized
open oscillator (in particular, $D=S$) can be established. We begin by noting that we may write the
dynamics (\ref{eq:goo-dyn-2}) in the following way:
$$
y(t)=Sy'(t),
$$
with $y'(t)$ defined as
$$
dy'(t)=S^{\dag}K x(t)dt + dA(t).
$$
Then by defining $K'=S^{\dag}K$ and substituting $K=SK'$ in (\ref{eq:goo-coeffs}), we see that $x(t)$
in (\ref{eq:goo-dyn-1}), and $y'(t)$ constitutes the dynamics for the open oscillator
$(I,K'x_0,\half x_0^TRx_0)$ with system matrices given by $(A,B,S^{\dag}C,I)$. Since $D=S$ and
$(S,L,H)=(S,0,0)\triangleleft (I,K'x_0,\half x_0^TRx_0)$ (cf.\ (\ref{eq:decomp-2})), from
\cite[Theorem 3.4]{JNP06} we see that there is a bijective correspondence between $(A,B,S^{\dag}C)$
and the parameters $(K',R)$ and that one set of parameters may be uniquely recovered from the other.
Therefore, we may define a system of the form (\ref{eq:q-ss}) to be physically realizable (extending
the notion in \cite{JNP06}) if it represents the dynamics of a generalized open oscillator (this
idea already appears in \cite[Chapter 7]{Nurd07}; see Remark 7.3.8 therein). This implies that a
system (\ref{eq:q-ss}) with matrices $(A,B,C,D)$ is physically realizable if and only if $D$ is a
complex unitary matrix and $(A,B,D^{\dag}C,I)$ are the system matrices of a physically realizable
system in the sense of \cite{JNP06} (i.e., $(A,B,D^{\dag}C,I)$ are the system matrices of an open
oscillator). Therefore, we may state the following theorem.

\begin{theorem}\label{thm4.1}
There is a bijective correspondence between the system matrices $(A,B,C,D)$ and the parameters
$(S,K,R)$ of a generalized open oscillator. For given $(S,K,R)$, the corresponding system matrices
are uniquely given by {\rm (\ref{eq:goo-coeffs})}. Conversely, for given $(A,B,C,D)$, which are the
system matrices of a generalized open oscillator $G$ with parameters $(S,K,R)$, then $D$ is unitary, and
$S=D$ and $(A,B,D^{\dag} C,I)$ are the system matrices of some open oscillator
$G'=(I,K'x_0,\half x_0^T R x_0)$. The parameters $(K',R)$ of the open oscillator $G'$ is uniquely
determined from $(A,B,D^{\dag}C)$ by {\rm \cite[{\it Theorem} 3.4]{JNP06}} (by suitably adapting the
matrices $B$ and $D^{\dag}C$), from which the parameter $K$ of $G$ is then uniquely determined
as $K=DK'$.
\end{theorem}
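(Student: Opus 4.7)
The plan is to treat the two directions separately and, for the nontrivial direction, to reduce to the already-proven correspondence of \cite[Theorem 3.4]{JNP06} via the decomposition (\ref{eq:decomp-2}). The forward direction is essentially tautological: given $(S,K,R)$ with $S$ unitary, the matrices $(A,B,C,D)$ are prescribed by the explicit formulas (\ref{eq:goo-coeffs}), so the map $(S,K,R)\mapsto(A,B,C,D)$ is well-defined and single-valued, and there is nothing further to check.

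For the reverse direction I would proceed in three short steps, mirroring the informal derivation already sketched in the paragraph preceding the theorem. First, read off from (\ref{eq:goo-coeffs}) that $D=S$; since $S$ is by definition unitary in a generalized open oscillator, so is $D$. Second, introduce the auxiliary matrix $K':=D^{\dag}K=S^{\dag}K$ and the transformed output $y'(t):=D^{\dag}y(t)$, which satisfies $dy'(t)=D^{\dag}Cx(t)\,dt+dA(t)$. Substituting $K=SK'$ back into (\ref{eq:goo-coeffs}) shows that $(A,B,D^{\dag}C,I)$ is exactly the quadruple of system matrices that (\ref{eq:goo-coeffs}) would produce for the \emph{plain} open oscillator $G'=(I,K'x_0,\tfrac12 x_0^T R x_0)$; thus $x(t)$ and $y'(t)$ realize the dynamics of $G'$. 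Third, invoke \cite[Theorem 3.4]{JNP06} to extract $(K',R)$ uniquely from $(A,B,D^{\dag}C)$, and then set $S=D$ and $K=DK'$ to recover the parameters of the original generalized open oscillator. Uniqueness propagates all the way through: $D$ determines $S$, the JNP06 bijection determines $(K',R)$, and $K$ is then forced by $K=DK'$.

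The only place where an actual calculation is required is the substitution $K=SK'$ in the formula for $B$: the block $B=2i\Theta[\,-K^{\dag}S\ \ K^T S^{\#}\,]$ must collapse to $2i\Theta[\,-K'^{\dag}\ \ K'^T\,]$. The first block is immediate from $K^{\dag}S=K'^{\dag}S^{\dag}S=K'^{\dag}$. The second block uses the identity $S^T S^{\#}=I$, which follows by transposing $S^{\dag}S=I$ (equivalently, $(S^T S^{\#})_{ij}=\sum_k S_{ki}\overline{S_{kj}}=(S^{\dag}S)_{ji}^{*}=\delta_{ij}$). This is the sole computational obstacle and is essentially routine; the remainder of the argument is a bookkeeping exercise that assembles the unitarity of $S$, the series-product decomposition (\ref{eq:decomp-2}), and the pre-existing bijection of \cite[Theorem 3.4]{JNP06}.
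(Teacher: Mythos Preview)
Your proposal is correct and follows essentially the same approach as the paper: the paper's ``proof'' is precisely the paragraph preceding the theorem statement (culminating in ``Therefore, we may state the following theorem''), which introduces $K'=S^{\dag}K$, substitutes $K=SK'$ into (\ref{eq:goo-coeffs}), invokes the decomposition (\ref{eq:decomp-2}), and then appeals to \cite[Theorem~3.4]{JNP06}. Your write-up simply makes explicit the one computation the paper leaves to the reader, namely the verification that $K^T S^{\#}=K'^T$ via $S^T S^{\#}=I$.
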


Due to this interchangeability of the description by $(A,B,C,D)$ and by $(S,K,R)$ for a generalized
open oscillator, it does not matter with which set of parameters one works with. However, for
convenience of analysis in the remainder of the paper we shall work exclusively with the
parameters $(S,K,R)$.

\section{Main synthesis theorem}\label{sec:syn-theory}
Suppose that there are two independent generalized open oscillators coupled to $m$ independent
bosonic fields, with $m$ output channels: an $n_1$ degrees of freedom oscillator $G_1=(S_1,L_1,H_1)$
with canonical operators $x_1=(q_{1,1},p_{1,1},\ldots,q_{1,n_1},p_{1,n_1})\trp$, Hamiltonian operator
$H_1=\half x^T R_1 x_1$, coupling operator $L_1=K_1 x_1$, and scattering matrix $S_1$, and, similarly,
an $n_2$ degrees of freedom oscillator $G_2=(S_2,L_2,H_2)$ with canonical operators
$x_2=(q_{2,1},p_{2,1},\ldots,q_{2,n_2},p_{2,n_2})\trp$, Hamiltonian operator
$H_2=\half x_2^T R_2 x_2$, coupling operator $L_2=K_2 x_2$, and unitary scattering matrix
$S_2$.

Consider now a reducible quantum network $\Ncal_{12}$ constructed from $G_1$ and $G_2$ as
$\Ncal_{12}=\{\{G_1,G_2\},H^d_{12},G_2 \triangleleft G_1\}$, as shown in Figure \ref{fig:G1G2network},
where $H^d_{12}$ is a direct interaction Hamiltonian term between $G_1$ and $G_2$ given by
\begin{eqnarray*}
H^d_{12}&=&\half x_1^T R_{12} x_2 + \half x_2\trp R_{12}\trp
x_1-\frac{1}{2i}(L_2^{\dagger}S_2
L_1-L_1^{\dagger}S_2^{\dagger}L_2);\, R_{12} \in \Rbb^{2 \times 2}\\
&=& x_2\trp R_{12}\trp x_1-\frac{1}{2i}(L_2^{\dagger}S_2
L_1-L_2\trp
S_2^{\#}L_1^{\#})\\
&=&x_2\trp
\left( R_{12}\trp-\frac{1}{2i}(K_2^{\dagger}S_2K_1-K_2\trp
S_2^{\#}K_1^{\#})\right)x_1,
\end{eqnarray*}
where we recall that $A^{\#}$ denotes the elementwise adjoint of a matrix of operators $A$ and the
second equality holds, since elements of $L_1$ commute with those $L_2$. Also note that the matrix
$\frac{1}{2i}(K_2^{\dagger}S_2K_1-K_2\trp S_2^{\#}K_1^{\#})$ is real.
\begin{figure}[b!]
\centerline{\includegraphics[width=12pc]{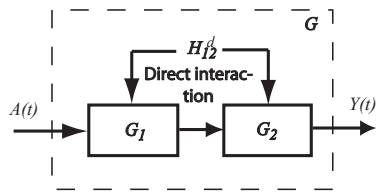}}
\caption{Cascade connection of $G_1$ and $G_2$ with indirect interaction $H_{12}^d$.\label{fig:G1G2network}}
\end{figure}
Some straightforward calculations (see \cite{GJ07} for details) then show that we may write
\begin{eqnarray*}
\Ncal_{12}&=&(S_2 S_1,S_2 L_1+L_2,H_1+H_2+H_{12}^f+H_{12}^d),
\end{eqnarray*}
where $H_{12}^f=\frac{1}{2i}(L_2^{\dagger}S_2 L_1-L_1^{\dagger}S_2^{\dagger}L_2)$. Now let us look
closely at the Hamiltonian term of $\Ncal_{12}$. Note that after plugging in the definition of
$H_1$, $H_2$, $H^d_{12}$, and $H_{12}^f$, we may write
\begin{eqnarray*}
H_1+H_2+H_{12}^f+H_{12}^d &=& \half [\begin{array}{cc} x_1\trp & x_2\trp
\end{array}]\left[
\begin{array}{cc} R_1 & R_{12} \\ R_{12}\trp & R_2 \end{array}\right]
\left[\begin{array}{c} x_1 \\ x_2 \end{array}\right].
\end{eqnarray*}
Letting $x=(x_1\trp,x_2\trp)\trp$, $S_{2\leftarrow 1}=S_2 S_1$, and defining
\begin{eqnarray}
R&=&\left[\begin{array}{cc} R_1 & R_{12} \\ R_{12}\trp & R_2\end{array}\right],\label{eq5.1}\\
K&=&[\begin{array}{cc} S_2 K_1 & K_2 \end{array}],\label{eq5.2}
\end{eqnarray}
we see that
\begin{equation}
\Ncal_{12}= \left(S_{2\leftarrow 1},K  x, \frac{1}{2}x \trp R x\right).
\label{eq5.3}
\end{equation}
Therefore, $\Ncal_{12}=(S_{2\leftarrow 1},L_{2\leftarrow 1},H_{2\leftarrow 1})$, with
$S_{2\leftarrow 1}=S_2S_1,\,L_{2\leftarrow 1}=K x,\,\hbox{and}\,H_{2\leftarrow 1}=\half x\trp R x$.
In other words, a reducible network formed by a bilinear direct interaction and cascade connection of
two generalized open oscillators having the same number of input and output fields results in another
generalized open oscillator with a degrees of freedom which is the sum of the degrees of freedom of
the two constituent oscillators and having the same number of inputs and outputs.

By repeated application of the above construction, we can prove the following synthesis theorem.

\begin{theorem}\label{th:synthesis}
Let $G$ be an $n$ degrees of freedom generalized open oscillator with Hamiltonian matrix
$R \in \Rbb^{2n \times 2n}$, coupling matrix $K \in \Cbb^{m \times 2n}$, and unitary scattering
matrix $S \in \Cbb^{m \times m}$. Let $R$ be written in terms of blocks of $2 \times 2$ matrices
as $R=[R_{jk}]_{j,k=1,\ldots,n}$, where the $R_{jk}$'s are real $2 \times 2$ matrices satisfying
$R_{kj}=R_{jk}\trp$ for all $j,k$, and let $K$ be written as
\[
K=[\begin{array}{cccc} K_1 & K_2 & \ldots & K_n
\end{array}],
\]
where, for each $j$, $K_j \in \Cbb^{m \times 2}$. For $j=1,\ldots,n$, let
$G_j=(S_j,\tilde K_j x_j, \half x_j^T R_{jj} x_j)$ be independent one degree of freedom generalized
open oscillators with canonical operators $x_j=(q_j,p_j)\trp$, $m$ output fields, Hamiltonian matrix
$R_{jj}$, coupling matrix $\tilde K_j$, and scattering matrix $S_j$. Also, define
$S_{k \twoheadleftarrow j}$ for $j\leq k+1$ as $S_{k \twoheadleftarrow j}=\prod_{l=j}^{k}
S_l=S_k \cdots S_{j+1}S_j$ for $j<k$, $S_{k \twoheadleftarrow k}=S_k$, and $S_{k
\twoheadleftarrow k+1}=I_{m \times m}$, and let $H^d$ be a direct interaction Hamiltonian given
by
\begin{eqnarray}
H^d&=&\sum_{j=1}^{n-1}\sum_{k=j+1}^{n}
x_k\trp\left(\vphantom{\frac{1}{2i}(\tilde K_k^{\dagger}S_{k
\twoheadleftarrow j+1}\tilde K_j-\tilde K_k\trp S_{k
\twoheadleftarrow j+1}^{\#}\tilde K_j^{\#})} R_{jk}\trp-
\frac{1}{2i}(\tilde K_k^{\dagger}S_{k \twoheadleftarrow j+1}\tilde
K_j-\tilde K_k\trp S_{k \twoheadleftarrow j+1}^{\#}\tilde
K_j^{\#})\right)x_j. \label{eq:multi-direct}
\end{eqnarray}
If $S_1,\ldots,S_n$ satisfies $S_{n} S_{n-1} \cdots S_1=S$ and $\tilde K_k$ satisfies
$\tilde K_k = S_{n \twoheadleftarrow k+1}^{\dagger}K_k$ for $k=1,\ldots,n$, then the reducible network
of harmonic oscillators $\Ncal$ given by $\Ncal=\{\{G_1,\ldots,G_n\},H^d,\{G_2 \triangleleft G_{1},G_3
\triangleleft G_2,\ldots,G_{n} \triangleleft G_{n-1}\}\}$ is equivalent to $G$. That is, $G$ can be
synthesized via a series connection $G_n \triangleleft \ldots \triangleleft G_2 \triangleleft G_1$ of
$n$ one degree of freedom generalized open oscillators, along with a suitable bilinear direct interaction
Hamiltonian involving the canonical operators of these oscillators. In particular, if $S=I_{m \times m}$
(no scattering), then $S_k$ can be chosen to be $S_k=I_{m \times m}$ and $\tilde K_k$ can be chosen to
be $\tilde K_k=K_k$ for $k=1,\ldots,n$.
\end{theorem}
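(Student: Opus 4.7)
The plan is to prove the theorem by induction on $n$, taking as the base case $n=2$ that was already carried out in the paragraph immediately preceding the theorem (equations (5.1)--(5.3)). The case $n=1$ is vacuous (the sums in $H^d$ are empty and the single condition $S_1=S$, $\tilde K_1 = K_1$ holds trivially).

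For the inductive step, suppose the result holds for every $(n-1)$-dof generalized open oscillator, and consider $G=(S, Kx, \tfrac{1}{2}x^T R x)$ with $n$ degrees of freedom. I would peel off the last mode: partition $x=(\hat x^T, x_n^T)^T$ with $\hat x=(x_1^T,\ldots,x_{n-1}^T)^T$, write $K=[\hat K, K_n]$ with $\hat K=[K_1,\ldots,K_{n-1}]$, and split $R$ into the $(n-1)\times (n-1)$ block $\hat R$, the cross blocks $R_{kn}$ ($k<n$), and $R_{nn}$. Define an auxiliary $(n-1)$-dof oscillator
\[
\hat G = \bigl(\hat S,\hat K' \hat x,\tfrac{1}{2}\hat x^T \hat R\hat x\bigr),\quad \hat S := S_{n-1}\cdots S_1,\quad \hat K' := S_n^{\dagger}\hat K.
\]
Applying the base-case formulas (5.1)--(5.3) to the pair $(\hat G, G_n)$, the series product $G_n\triangleleft \hat G$ has scattering matrix $S_n\hat S=S$ (by hypothesis) and coupling $S_n\hat K'\hat x+K_n x_n = \hat K\hat x+K_n x_n = Kx$; its Hamiltonian equals $\tfrac{1}{2}\hat x^T\hat R\hat x+\tfrac{1}{2}x_n^T R_{nn}x_n$ plus a feedthrough $\tfrac{1}{2i}(L_n^{\dagger}S_n\hat L-\hat L^{\dagger}S_n^{\dagger}L_n)$. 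Choosing the direct interaction between $\hat x$ and $x_n$ to be
\[
H^d_{[1:n-1],n}=\sum_{k=1}^{n-1}x_n^T\left(R_{kn}^{T}-\tfrac{1}{2i}\bigl(K_n^{\dagger}S_n\hat K'_k-K_n^T S_n^{\#}(\hat K'_k)^{\#}\bigr)\right)x_k
\]
and adding it to $G_n\triangleleft \hat G$ cancels the unwanted feedthrough and reproduces the off-diagonal quadratic cross terms $x_k^T R_{kn}x_n$ of $R$; this gives exactly the $k=n$ summands of the $H^d$ displayed in (5.4). Next I would invoke the induction hypothesis on $\hat G$ to realize it as $G_{n-1}\triangleleft\cdots\triangleleft G_1$ equipped with its own direct interaction $\hat H^d$, which furnishes the remaining summands $j<k\leq n-1$ of $H^d$.

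The one piece of bookkeeping I would need to verify is that the couplings and scattering factors match under the recursion. Applied to $\hat G$, the inductive prescription gives sub-coupling matrices $(S_{n-1}\cdots S_{k+1})^{\dagger}\hat K'_k$ for $k<n$; substituting $\hat K'_k=S_n^{\dagger}K_k$ and using $S_{n\twoheadleftarrow k+1}=S_n S_{n-1}\cdots S_{k+1}$ collapses this to $S_{n\twoheadleftarrow k+1}^{\dagger}K_k=\tilde K_k$, exactly the theorem's prescription; for $k=n$ the convention $S_{n\twoheadleftarrow n+1}=I$ gives $\tilde K_n=K_n$. The main obstacle is exactly this bookkeeping for the feedthrough Hamiltonians: iterating the series product of a long cascade produces, between any two modes $G_j$ and $G_k$ with $j<k$, a feedthrough carrying the scattering factor $S_{k\twoheadleftarrow j+1}$ (the product of all scattering matrices the light traversing from $G_j$ to $G_k$ encounters, including $S_k$ itself). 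Verifying this by iterated application of the series product formula, together with the identity $x_j^T M x_k = x_k^T M^T x_j$ (valid because canonical operators of distinct one-dof oscillators commute), matches the compensation term written in (5.4) exactly, so that the total network Hamiltonian collapses to $\tfrac{1}{2}x^T R x$ as required. Finally, the $S=I$ clause follows by setting $S_k=I$ for all $k$: then $S_{k\twoheadleftarrow j+1}=I$, the compensation $\tfrac{1}{2i}(\tilde K_k^{\dagger}\tilde K_j-\tilde K_k^T\tilde K_j^{\#})$ is real as needed, and $\tilde K_k=K_k$ is immediate from the general formula.
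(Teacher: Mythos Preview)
Your proposal is correct and follows essentially the same strategy as the paper's proof: both arguments build the cascade $G_n\triangleleft\cdots\triangleleft G_1$ recursively from the two-system computation (5.1)--(5.3), then check that the accumulated feedthrough Hamiltonian is precisely what the prescribed $H^d$ cancels, leaving $\tfrac12 x^T R x$. The only cosmetic difference is that the paper unrolls the recursion explicitly---computing $G_{12},G_{123},\ldots,G_{12\ldots n}$ in one pass, writing the resulting feedthrough $H_n^f$ in closed form, and then adding $H^d$ at the very end---whereas you package the same computation as a formal induction that interleaves the inductive $\hat H^d$ with the final cross-term $H^d_{[1:n-1],n}$; the bookkeeping identities you identify (e.g.\ $S_{n\twoheadleftarrow j+1}\tilde K_j=K_j$) are exactly those the paper uses implicitly.
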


\begin{proof}
Let $H_j=\half x_j\trp R_{jj}x_j$, $L_j=\tilde K_j x_j$ and
$$
H_{k}^f=\sum_{j=2}^{k}\left(L_{j}^{\dagger}\sum_{l=1}^{j-1}
S_{j \twoheadleftarrow l+1}L_l-\sum_{l=1}^{j-1} L_l^{\dagger}S_{j
\twoheadleftarrow l+1}^{\dagger} L_j\right),\quad k\geq 2.
$$

Let us begin with the series connection $G_{12}=G_2 \triangleleft G_1$. By analogous calculations as
given above for the two oscillator case, it is given by
$$
G_{12}=(S_2 S_1,S_2 L_1+L_2,H_1+H_2+H_{2}^f).
$$
Repeating this calculation recursively for $G_{123}=G_3 \triangleleft G_{12}$,
$G_{1234}=G_4 \triangleleft G_{123}, \ldots, \break G_{12\ldots n}=G_n \triangleleft G_{12\ldots (n-1)}$,
we obtain at the end that
$$
G_{12\ldots n }=\left(S_{n \twoheadleftarrow 1},\sum_{k=1}^{n} S_{n
\twoheadleftarrow k+1}L_k,\sum_{k=1}^{n}H_k+H_{n}^f\right).
$$
Noting that $H_n^f$ may be rewritten as
\begin{eqnarray*}
H_n^f&=&\frac{1}{2i}\sum_{j=1}^{n-1}\sum_{k=j+1}^{n}(L_k^{\dagger}S_{k
\twoheadleftarrow j+1}L_j-L_j^{\dagger}S_{k
\twoheadleftarrow j+1}^{\dagger}L_k)\\
&=&\frac{1}{2i}\sum_{j=1}^{n-1}\sum_{k=j+1}^{n} (L_k^{\dagger}S_{k
\twoheadleftarrow j+1}L_j-L_k\trp S_{k
\twoheadleftarrow j+1}^{\#}L_j^{\#})\\
&=&\frac{1}{2i}\sum_{j=1}^{n-1}\sum_{k=j+1}^{n} x_k\trp(\tilde
K_k^{\dagger}S_{k \twoheadleftarrow j+1}\tilde K_j-\tilde K_k\trp
S_{k \twoheadleftarrow j+1}^{\#}\tilde K_j^{\#})x_j,
\end{eqnarray*}
where the second equality holds since $L_j$ commutes with $L_k$ whenever $j \neq k$, we find that
\begin{eqnarray*}
\sum_{k=1}^nH_k+H_n^f +H^d &=&\half \sum_{j=1}^{n}\sum_{k=1}^{n}x_j
R_{jk}x_k =\half x\trp R x,\quad x=(x_1\trp,x_2\trp,\ldots,x_n\trp)\trp.
\end{eqnarray*}
Therefore, if $S_1,\ldots,S_n$ and $\tilde K_1,\ldots,\tilde K_n$ satisfy the conditions stated in the
theorem, we find that $\Ncal=\{\{G_1,\ldots,G_n\},H_d,\{G_2 \triangleleft G_1,G_3 \triangleleft
G_2,\ldots,G_n \triangleleft G_{n-1}\}$ is given by
\begin{eqnarray*}
\Ncal &=& \left(S,K x,\half x\trp R x \right).
\end{eqnarray*}
That is, $\Ncal$ is a linear quantum stochastic system with Hamiltonian matrix $R$, coupling matrix
$K$, and scattering matrix $S$, and is therefore equivalent to $G$. This completes the proof of the
synthesis theorem.%
\qquad\end{proof}

Therefore, according to the theorem, synthesis of an arbitrary $n$ degrees of freedom linear dynamical
quantum stochastic system is in principle possible if the following two requirements can be met:
\begin{enumerate}
\item Arbitrary one degree of freedom open oscillators $G=(I,L,H)$ with $m$ input and output fields
can be synthesized. In particular, it follows from this that one degree of freedom generalized open
oscillators $G'=(S,L,H)$ can be synthesized as $G'=(I,L,H) \triangleleft (S,0,0)$.

\item The bilinear interaction Hamiltonian $H^d$ as given by (\ref{eq:multi-direct}) can be synthesized.
\end{enumerate}

One can observe certain parallels between the quantum synthesis described in the theorem with the
active state-space synthesis method of linear electrical network synthesis theory (e.g.,
\cite[Chapter 13]{AV73}). To begin  with, we may think of  each oscillator $G_j$ as a type of noisy
quantum integrator, as the  counterpart of a classical integrator (built from an operational amplifier,
resistors, and capacitors) in an electrical network, and in both settings synthesis can be achieved by
suitably cascading these basic integrating components. We may also view the direct interaction Hamiltonian
between two oscillators as acting like a type of  mutual ``state feedback'' between the oscillators,
much like the state feedback employed in electrical network synthesis. However, because of the quite
distinct nature of classical and quantum systems, of course the parallels should not be taken to be
``exact'' or ``precise'' in any way, the nature of these parallels are in spirit rather than detail.
Certainly, classical active synthesis theory cannot be applied directly to linear quantum stochastic
systems because of certain constraints that a noisy quantum integrator must satisfy that are not
required of its classical counterpart, and the classical theory is deterministic while in the quantum
theory,  quantum stochastic noise plays a crucial role, for instance, to preserve the canonical commutation
relations in open quantum systems. To highlight another significant difference between the two physical
systems, we note that losses in linear electrical systems may be  modeled by inserting resistors as
dissipative components of the system, while in linear quantum systems,  losses are modeled by lossy
couplings to quantum noises (heat baths).

\section{Systematic synthesis of linear quantum stochastic systems}\label{sec:system-synth}
This section details the construction of arbitrary one degree of freedom open quantum harmonic oscillators
and implementation of bilinear direct interactions among the canonical operators of these oscillators,
at least approximately, in the context of quantum optics, using various linear and nonlinear quantum
optical components.

We begin with a description of some key quantum optical components that will be required for the
synthesis. This is followed by a discussion of general synthesis of one degree of freedom open
oscillators and finally by a discussion of the implementation of bilinear direct interaction Hamiltonians
among different one degree of freedom open oscillators.

\subsection{Essential quantum optical components}\label{sec6.1}
\subsubsection{Optical cavities}\label{sec:opt-cav}
An optical cavity is a system of fully reflecting or partially transmitting mirrors in which a light
beam is trapped and is either bounced repeatedly from the mirrors to form a standing wave or circulates
inside the cavity (as in a ring cavity); see Figure \ref{fig:ring-cav}. If there are transmitting mirrors
present, then light can escape or leak out from the cavity, introducing losses to the cavity.

\begin{figure}[tbp!]
\centerline{\includegraphics[width=18pc]{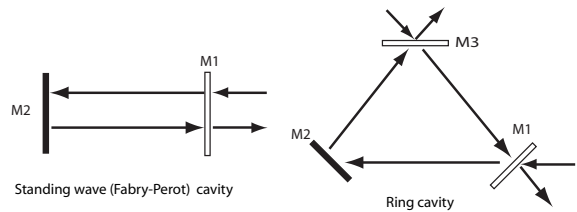}}
\caption{Two types of optical cavities: a standing wave or Fabry--Perot cavity (left) and a
(three mirror) ring cavity (right). Arrows indicate the direction of propagation of light in
the cavity. Black rectangles denote fully reflecting mirrors, while white rectangles denote partially
transmitting mirrors.\label{fig:ring-cav}}
\end{figure}

A cavity is mathematically modeled by a Hamiltonian $H_{cav}=\omega_{cav} a^*a$, where $\omega_{cav}$
is the resonance frequency of the cavity and $a=\frac{q+ip}{2}$ is the (non-self-adjoint) cavity
annihilation operator or cavity mode satisfying the commutation relation $[a,a^*]=1$. Here $q=a+a^*$
is the position operator of the cavity mode (also called the {\em amplitude quadrature} of the mode)
and $p=-ia+ia^*$ is the momentum operator of the cavity mode (also called the {\em phase quadrature}
of the mode). If there is a transmission mirror, say, M, then losses through this mirror are modeled as
having a vacuum bosonic noise field $A(t)$ incident at this mirror and interacting with the cavity
mode via the idealized Hamiltonian $H_{Int}$ given in (\ref{eq:ideal-int}) with $L=\sqrt{\kappa}a$,
where $\kappa$ is a positive constant called the mirror \emph{coupling coefficient}. When there are
several leaky mirrors, then the losses are modeled by a sum of such interaction Hamiltonians, one for
each mirror and with each mirror having its own distinct vacuum bosonic field. The total Hamiltonian
of the cavity is then just the sum of $H_{cav}$ and the interaction Hamiltonians. More generally, the
field incident at a transmitting mirror need not be a vacuum field, but can be other types of fields,
such as a coherent laser field. Nonetheless, the interaction of the cavity mode with such fields
via the mirror will still be governed by (\ref{eq:ideal-int}) with a coupling operator of the form
$L=\sqrt{\kappa}a$.

\subsubsection{Degenerate parametric amplifier}\label{sec:DPA}
In order to amplify a quadrature of the cavity mode, for example, to counter losses in that quadrature
caused by light escaping through a transmitting mirror, one can employ a $\chi^{(2)}$ nonlinear
optical crystal and a classical pump beam in the configuration of a degenerate parametric amplifier
(DPA), following the treatment in \cite[section 10.2]{GZ00}. The pump beam acts as a source of
additional quanta for amplification and, in the nonlinear crystal, an interaction takes place in which
photons of the pump beam are annihilated to create photons of the cavity mode. In an optical cavity,
such as a ring cavity shown in Figure \ref{fig:DPA-cav}, we place the crystal in one arm of the cavity
(for example, in the arm between mirrors M1 and M2) and shine the crystal with a strong coherent pump
beam of (angular) frequency $\omega_p$ given by $\omega_p=2\omega_r$, where $\omega_r$ is some reference
frequency. Here the mirrors at the end the arms should be chosen such that they do not reflect light
beams of frequency $\omega_p$. A schematic representation of a DPA (a nonlinear crystal with a classical
pump) is shown in Figure \ref{fig:DPA}.

\begin{figure}[tbp!]
\centerline{\includegraphics[width=9pc]{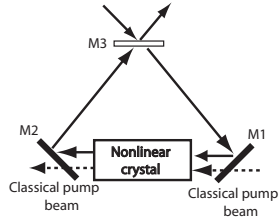}}
\caption{A DPA consisting of a classically pumped nonlinear crystal in
a three mirror ring cavity.\label{fig:DPA-cav}}
\end{figure}

\begin{figure}[tbp!]
\centerline{\includegraphics[width=5.3pc]{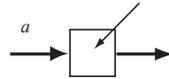}}
\caption{Schematic representation of a DPA. The white rectangle symbolizes the nonlinear crystal,
while the diagonal arrow into the rectangle denotes the pump beam.\label{fig:DPA}}
\end{figure}

\begin{remark}\label{remark6.1}\rm
In the remaining figures, black rectangles  will be used to denote mirrors which are fully reflecting
at the cavity frequency and fully transmitting at the pump frequency (whenever a pump beam is employed),
while white rectangles denote partially transmitting mirrors at the cavity frequency.
\end{remark}

Let $a=\frac{q+ip}{2}$ be the cavity mode, and let the cavity frequency $\omega_{cav}$ be detuned from
$\omega_r$ and given by $\omega_{cav}=\omega_r+\Delta$, where $\Delta$ is the frequency detuning. The
crystal facilitates an energy exchange interaction between the cavity mode and pump beam. By the
assumption that the pump beam is intense and  not depleted in this interaction, it may be assumed to
be classical, in which case the crystal-pump-cavity interaction can be modeled using the (time-varying)
Hamiltonian $H(t)=\omega_{cav}a^{*}a+\frac{i}{2} (\epsilon e^{-i\omega_p t}
(a^{*})^2-\epsilon^*e^{i\omega_p t}a^2)$ \cite[equation 10.2.1]{GZ00}, where $\epsilon$ is a complex number
representing the {\em effective} pump intensity. By transforming to a rotating frame with respect to
$\omega_r=\frac{\omega_p}{2}$ (i.e., by application of the transformation $a \mapsto a
e^{i\frac{\omega_p}{2}t}$; see \cite[section 10.2.1]{GZ00} for a derivation of the equations of motion
of the DPA in the rotating frame), $H$ can be reexpressed as
$$
H=\Delta a^* a+\frac{i}{2} (\epsilon (a^{*})^2-\epsilon^*a^2)
$$
and be written compactly as $H=\half x_0^T R x_0+c$ (recall $x_0=(q,p)^T$), where
\begin{eqnarray}
R=\half \left[\begin{array}{cc} \Delta+\dst\frac{i}{2}(\epsilon-\epsilon^*)
& \dst\frac{1}{2}(\epsilon+\epsilon^*) \\[6pt]
\dst\frac{1}{2}(\epsilon+\epsilon^*) &
\dst \Delta-\frac{i}{2}(\epsilon-\epsilon^*)
\end{array}\right] \label{eq:R-DPA}
\end{eqnarray}
and $c$ is a real number. Since $c$ merely contributes a phase factor that has no effect on the overall
dynamics of the system operators, it plays no essential role and can simply be ignored (cf.\ section
\ref{sec:models}). Note that transformation to a rotating frame effects the following: If $a(t)$ is the
evolution of $a$ under the original time-varying Hamiltonian $H(t)=\omega_{cav}a^{*}a+\frac{i}{2}
(\epsilon e^{-i\omega_p t} (a^{*})^2-\epsilon^*e^{i\omega_p t}a^2)$ and we define
$\tilde a(t) = a(t)e^{i \omega_r t}$ (i.e., $\tilde a(t)$ is $a(t)$ in a frame rotating at frequency
$\omega_r$), then $\tilde a(t)$ coincides with the time evolution of $a$ under the time-independent
Hamiltonian $\tilde H=\half x_0^T R x_0$. In other words, in this rotating frame, the DPA can be viewed
as a harmonic oscillator with quadratic Hamiltonian $\half x_0^T R x_0$.

\subsubsection{Two-mode squeezing}\label{sec:two-mode}
If two cavities are positioned in such a way that the beams circulating in them intersect one another,
then these beams will merely pass through each other without interacting. One way of making the beams
interact is to have their paths intersect inside a $\chi^{(2)}$ nonlinear optical crystal. Typically,
to facilitate such an interaction, one or two auxiliary pump beams are also employed as a source of
quanta/energy. For instance, in a $\chi^{(2)}$ optical crystal in which the modes of two cavities
interact with an undepleted classical pump beam as depicted in Figure \ref{fig:two-mode}, the
interaction can be modeled by the Hamiltonian
$$
H(t)=\frac{i}{2}(\epsilon e^{-i\omega_p t}a_1^*a_2^*-\epsilon^*
e^{i\omega_p t} a_1 a_2),
$$
where $\epsilon$ is a complex number representing the effective intensity of the pump beam and $\omega_p$
is the pump frequency. Transforming to a rotating frame at half the pump frequency by applying the rotating
frame transformation $a_1 \mapsto a_1e^{i\frac{\omega_p}{2}t}$ and $a_2 \mapsto a_2 e^{i\frac{\omega_p}{2}t}$,
$H(t)$ can be expressed in this new frame in the time-invariant form
$H=\frac{i}{2}(\epsilon a_1^*a_2^*-\epsilon^*a_1 a_2)$. This type of Hamiltonian is called a {\em two-mode
squeezing Hamiltonian}, as it simultaneously affects squeezing in one quadrature of (possibly rotated versions
of) $a_1$ and $a_2$ and will play an important role later on in the paper. A two-mode squeezer is
schematically represented by the symbol shown in Figure \ref{fig:schem-tms}.

\begin{figure}[tbp!]
\centerline{\includegraphics[width=9pc]{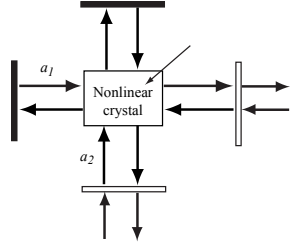}}
\caption{Two cavity modes interacting in a single classically pumped nonlinear crystal. The diagonal
arrow into the crystal denotes the pump beam.\label{fig:two-mode}}
\vspace{8pt}
\end{figure}

\begin{figure}[tbp!]
\centerline{\includegraphics[width=5.2pc]{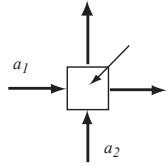}}
\caption{Schematic representation of a two-mode squeezer.\label{fig:schem-tms}}
\end{figure}

\begin{remark}\label{rem:ref-frame}\rm
It will be implicitly assumed in this paper that the equations for the dynamics of generalized open
operators are given with respect to a common rotating frame of frequency $\omega_r$, including the
transformation of all bosonic noises $A_i(t)$ according to $A_i(t) \mapsto A_i(t)e^{i\omega_r t}$,
and that classical pumps employed are all of frequency $\omega_p=2\omega_r$. This is a natural setting
in quantum optics where a rotating frame is essential for obtaining linear time invariant QSDE models
for active devices that require an external source of quanta. In a control setting, this means both
the  quantum plant and the controller equations have been expressed in the same rotating frame.
\end{remark}

\subsection{Static linear optical devices and networks}\label{sec:static}
Static linear optical devices implement static linear transformations (meaning that the transformation
can be represented by a complex square matrix) of a set of independent incoming single mode fields,
such as the field in a cavity, $a=(a_1,a_2,\ldots a_m)^T$ to an equal number $a'=(a_1',a_2',\ldots a_m')^T$
of independent outgoing fields. The incoming fields satisfy the commutation relations $[a_j,a_k]=0$
and $[a_j,a_k^*]= \delta_{jk}$. The incoming fields may also be vacuum bosonic fields
$A(t)=(A_1(t),A_2(t),\ldots,A_m(t))^T$ with outgoing bosonic fields (that need no longer be in the
vacuum state) $A'(t)=(A_1'(t),A_2'(t),\ldots,A_m'(t))^T$. In the latter,  the commutation relations are
$[dA_j(t),dA_k(t)]=0$ and $[dA_j(t),dA_k(t)^*]=\delta_{jk} dt$. However, to avoid cumbersome and
unnecessary repetitions, in the following we shall only discuss the operation of a static linear
optical device in the context of single mode fields. The operation is completely analogous for bosonic
incoming and outgoing fields and  requires only making substitutions such as $a\rightarrow A(t)$,
$a'\rightarrow A'(t)$, $[a_j,a_k]=0\rightarrow [dA_j(t),dA_k(t)]=0$, and
$[a_j,a_k^*]=\delta_{jk} \rightarrow [dA_j(t),dA_k(t)^*]=\delta_{jk}dt$, etc.

The operation of a static linear optical device can mathematically be expressed as
\begin{eqnarray*}
\left[\begin{array}{l} a' \\ a'^{\#} \end{array}\right]&=&Q
\left[\begin{array}{l} a \\ a^{\#}
\end{array}\right];\; Q=\left[\begin{array}{cc} Q_1 & Q_2 \\
Q_2^{\#} &  Q_1^{\#} \end{array}\right],
\end{eqnarray*}
where $Q_1,Q_2 \in \Cbb^{m \times m}$ and $S$ is a quasi-unitary matrix \cite[section 3.1]{Leon03}
satisfying
$$
Q\left[\begin{array}{cc}I & 0 \\ 0 & -I
\end{array}\right]Q^{\dag}= \left[\begin{array}{cc} I & 0 \\ 0 &
-I
\end{array}\right].
$$
A consequence of the quasi-unitarity of $Q$ is that it preserves the commutation relations among the
fields, that is, to say that the output fields $a'$ satisfy the same commutation relations as $a$.
Another important property of a quasi-unitary matrix is that it has an inverse $Q^{-1}$ given by
$Q^{-1}=GQ^{\dag}G$, where $G=\mbox{\scriptsize$\left[\begin{array}{@{}cc@{}}I & 0 \\ 0 & -I \end{array}\right]$}$, and this
inverse is again quasi-unitary, i.e., the set of quasi-unitary matrices of the same dimension form
a group.

In the case where the submatrix $Q_2$ of $Q$ is $Q_2=0$, the device does not mix creation and
annihilation operators of the fields, and it necessarily follows that $Q_1$ is a complex {\em unitary}
matrix. Such devices are said to be {\em static passive} linear optical devices because they do not
require any external source of quanta for their operation. It is well known that any passive network
can be constructed using only beam splitters and mirrors (e.g., see references 2--4 in \cite{LN04}).
In all other cases, the devices are {\em static active}. Specific passive and static devices that will
be utilized in this paper will be discussed in the following.

\subsubsection{Phase shifter}\label{sec:phase-shifter}
A phase shifter is a device that produces an outgoing field that is a phase shifted version of the
incoming field. That is, if there is one input field $a$, then the output field is $a'=e^{i\theta}a$
for some real number $\theta$, called the {\em phase shift}; a phase shifter is schematically represented
by the symbol shown in Figure \ref{fig:phase-shifter}. By definition, a phase shifter is a static
passive device. The transformation matrix $Q_{PS}$ of a phase shifter with a single input field is
given by
$$
Q_{PS}=\left[\begin{array}{cc} e^{i\theta} & 0 \\ 0 &
e^{-i\theta}\end{array} \right].
$$

\begin{figure}[tb]
\centerline{\includegraphics[width=7pc]{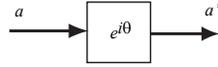}}
\caption{Phase shifter with a phase shift of $\theta$ radians.\label{fig:phase-shifter}}
\end{figure}

\subsubsection{Beam splitter}\label{sec:BS}
A beam splitter is a static and passive device that forms a linear combination  of two input fields
$a_1$ and $a_2$ to produce two output fields $a_1'$ and $a_2'$ such that energy is conserved:
$a_1^*a_1+a_2^*a_2=(a_1')^* a_1' + (a_2')^*a_2'$. The transformation affected by a beam splitter can
be written as
$$
Q_{BS}=\left[\begin{array}{cc} B & 0 \\ 0 & B^{\#}
\end{array}\right],
$$
where $B$ is a unitary matrix given by
$$
B=e^{i\Xi/2}\left[\begin{array}{cc} e^{i\Psi/2} & 0
\\ 0 & e^{-i\Psi/2} \end{array}\right]\left[\begin{array}{cc} \cos(\Theta)/2 & \sin(\Theta)/2 \\
-\sin(\Theta)/2
& \cos(\Theta)/2\end{array}\right]\left[\begin{array}{cc}
e^{i\Phi/2} & 0
\\ 0 & e^{-i\Phi/2} \end{array}\right].
$$
Here $\Xi,\Theta,\Phi,\Psi$ are real numbers. $\Theta$ is called the {\em mixing angle} of the beam
splitter, and it is the most important parameter. $\Phi$ and $\Psi$ introduce a phase difference in the
two incoming and outgoing modes, respectively, while $\Xi$ introduces an overall phase shift in both
modes.

A particularly useful result on the operation of a beam splitter with $\Xi=\Psi=\Phi=0$ is that it can
be modeled by an effective Hamiltonian $H_{BS}^0$ given by $H_{BS}^0=i\Theta(a_1^*a_2-a_1 a_2^*)$ (see
\cite[section 4.1]{Leon03} for details). This means that in this case we have the representation
$$
Q_{BS}\left[\begin{array}{l} a \\ a^{\#}
\end{array}\right]=\exp(iH_{BS}^0) \left[\begin{array}{l} a \\ a^{\#}
\end{array}\right]\exp(-iH_{BS}^0),
$$
where $a=(a_1,a_2)^T$. More generally, it follows from this, by considering phase shifted inputs
$a_1 \rightarrow a_1 e^{i\frac{\theta+\Phi}{2}}$ and $a_2 \rightarrow a_2 e^{i\frac{\theta-\Phi}{2}}$
($\theta$ being an arbitrary real number), that a beam splitter with $\Xi=0$ and $\Psi=-\Phi$ will
have the effective Hamiltonian $H_{BS}=i\Theta (e^{-i\Phi}a_1^*a_2-e^{i\Phi}a_1 a_2^*)=\alpha a_1^*a_2
+\alpha^* a_1 a_2$, with $\alpha=i\Theta e^{-i\Phi} $. This is the most general type of beam splitter
that will be employed in the realization theory of this paper. A beam splitter with a Hamiltonian of
the form $H_{BS}$ is represented schematically using the symbol in Figure \ref{fig:bs}.

\begin{figure}[tbp!]
\centerline{\includegraphics[width=4.5pc]{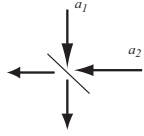}}
\caption{Schematic representation of a beam splitter.\label{fig:bs}}
\end{figure}

\subsubsection{Squeezer}\label{sec:squeezer}
Let there be a single input mode $a$. Write $a$ as $a=q'+ip'$, where  $q'=(a+a^*)/2$ is the {\em real}
or {\em amplitude} quadrature of $a$ and $p'=(a-a^*)/2i$ is the {\em imaginary} or {\em phase} quadrature
of $a$. {\em Squeezing} of a field is an operation in which the variance of one quadrature, either
$q'$ or $p'$, is squeezed or attenuated (it becomes less noisy) at the expense of increasing the
variance of the other quadrature (it becomes noisier). A device that performs squeezing of a field
is called a {\em squeezer}. An ideal squeezer affects the transformation $Q_{squeezer}$ given by
$$
Q_{squeezer}=\left[\begin{array}{cc} \cosh(s) & e^{i\theta}
\sinh(s)
\\ e^{-i\theta}\sinh(s) & \cosh(s) \end{array}\right],
$$
where $s$ and  $\theta$ are real parameters. We shall refer to $s$ as the squeezing parameter and
$\theta$ as the phase angle. For $s<0$, the squeezer squeezes the amplitude quadrature of
$e^{-i\frac{\theta}{2}}a$ (a phase shifted version of $a$) while if $s>0$, it squeezes the phase
quadrature and then shifts the phase of the squeezed field by $\frac{\theta}{2}$. A squeezer with
parameters $s,\theta$ is schematically represented by the symbol shown in Figure \ref{fig:squeezer}.

\begin{figure}[tbp!]
\centerline{\includegraphics[width=6.5pc]{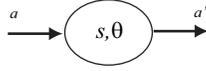}}
\caption{Schematic representation of a squeezer.\label{fig:squeezer}}
\end{figure}

A squeezer can be implemented, for instance, by using a combination of a parametric amplifier and a
beam splitter for single mode fields \cite[section 6.1]{Leon03}  or as a DPA with a transmitting mirror
for bosonic fields \cite[section 7.2.9]{GZ00}. It is easy to see that $Q_{squeezer}^{-1}$ is given by
$$
Q_{squeezer}^{-1}=\left[\begin{array}{cc} \cosh(-s) & e^{i \theta}
\sinh(-s)
\\ e^{-i \theta }\sinh(-s) & \cosh(-s) \end{array}\right]=\left[\begin{array}{cc} \cosh(s) & -e^{i \theta}
\sinh(s)
\\ -e^{-i \theta }\sinh(s) & \cosh(s) \end{array}\right].
$$

\subsubsection{Static optical linear networks}\label{sec6.2.4}
It is known that an arbitrary static linear optical network can be decomposed as a cascade of simpler
networks. In particular, any quasi-unitary matrix $Q$ can be constructively decomposed as \cite{LN04}:
\begin{eqnarray*}
Q&=&\exp\left[\begin{array}{cc} A_1 & 0 \\ 0 & A_1^{\#}
\end{array}
\right] \exp\left[ \begin{array}{cc} 0 & D \\ D & 0 \end{array}  \right] \exp\left[ \begin{array}{cc} A_3 & 0 \\
0 & A_3^{\#} \end{array}\right]\\
&=&\left[\begin{array}{cc} \exp A_1  & 0 \\ 0 & \exp A_1^{\#}
\end{array}
\right] \left[\begin{array}{cc} \cosh D & \sinh D  \\ \sinh D & \cosh D \end{array}  \right]
\left[ \begin{array}{cc} \exp A_3 & 0 \\
0 & \exp A_3^{\#} \end{array}\right],
\end{eqnarray*}
where $A_1$ and $A_3$ are skew symmetric complex matrices and $D$ is a real diagonal matrix.  The first
and third matrix exponential represent  passive static networks that can be implemented by beam splitters
and mirrors, while the second exponential represents an independent collection of squeezers (with trivial
phase angles) each acting on a distinct field.

In summary, in any static linear optical network the incident fields can be thought of as going through
a sequence of three operations: they are initially mixed by a passive network, then they undergo squeezing,
and finally they are subjected to another passive transformation. In the special case where the entire
network is passive, the squeezing parameters (i.e., elements of the $D$ matrix) are zero.

For example, a squeezer with arbitrary phase angle $\theta$ can be constructed by sandwiching a squeezer
with phase angle $0$ between a $-\theta/2$ phase shifter at its input and a $\theta/2$ phase shifter at
its output, respectively. This is shown in Figure \ref{fig:squeezer-imp}.

\begin{figure}[tbp!]
\centerline{\includegraphics[width=12.5pc]{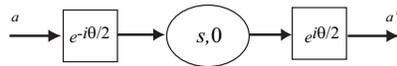}}
\caption{Implementation of a squeezer with arbitrary phase angle employing a squeezer with a zero phase
angle and two phase shifters.\label{fig:squeezer-imp}}
\end{figure}

\subsection{Synthesis of one degree of freedom open oscillators}\label{sec:one-degree-syn}
One degree of freedom open oscillators are completely described by a real symmetric Hamiltonian matrix
$R=R^T \in \Rbb^{2 \times 2}$ and complex coupling matrix $K \in \Cbb^{m \times 2}$. Thus one needs to
be able to implement both $R$ and $K$. Here we shall propose the realization of one degree of freedom
open quantum harmonic oscillators based around a ring cavity structure, such as shown in
Figure \ref{fig:ring-cav}, using fully reflecting and partially reflecting mirrors and nonlinear
optical elements appropriately placed between the mirrors.

The matrix $R$ determines the quadratic Hamiltonian $H=\half x^T R x$ and in a one-dimensional setup
such a quadratic Hamiltonian can be realized with a DPA as discussed
in section \ref{sec:DPA}. From (\ref{eq:R-DPA}), it is easily inspected that any real symmetric matrix
$R$ can be realized by suitably choosing the complex effective pump intensity parameter $\epsilon$ and the
cavity detuning parameter $\Delta$ of the DPA. In fact, for any particular $R$, the choice of parameters
is {\em unique}. For example, to realize
$$
R=\left[\begin{array}{cc} 1 & -2 \\
-2 & 0.5\end{array} \right],
$$
one solves the set of equations
$$
\Delta-\Im\{\epsilon\}=2,\quad  \Re\{\epsilon\}=-4, \quad \mbox{and}\quad
\Delta+\Im\{\epsilon\}=1
$$
for $\Delta,\Im\{\epsilon\},\Re\{\epsilon\}$ to yield the unique solution $\Delta=3/2$ and
$\epsilon=-4-i/2$.

Now, we turn to consider realization of the coupling operator $L=Kx_0$. Let us write
$K=[\begin{array}{ccc} K_1^T & \ldots & K_m^T\end{array}]^T$, where $K_l \in \Cbb^{1 \times 2}$ for
each $l=1,\ldots,m$. Each $K_l$ represents the coupling of the oscillator to the bosonic field $A_l$,
and so it suffices to study how to implement a single linear coupling to just one field. To this end,
suppose now that there is only one field $A(t)$ coupled to the oscillator via a linear coupling operator
$L=Kx_0$ for some $K \in \Cbb^{1 \times 2}$. It will be more convenient to express $L=\alpha q + \beta p$
in terms of the oscillator annihilation operator $a$ and creation operator $a^*$ defined by $a=(q+ip)/2$
and $a^*=(q-ip)/2$. Therefore, we write $L=\tilde \alpha a+\tilde \beta  a^*$, with
$\tilde \alpha=\frac{\alpha - i\beta}{2}$ and $\tilde\beta=\frac{\alpha + i\beta}{2}$. Consider the
physical scheme shown in Figure \ref{fig:diss-coup-1}, partly inspired by a scheme proposed by Wiseman
and Milburn for quantum nondemolition measurement of the position operator, treated at the level
of master equations \cite{WM93} (whereas here we consider unitary models and QSDEs).
\begin{figure}[tbp!]
\centerline{\includegraphics[width=18pc]{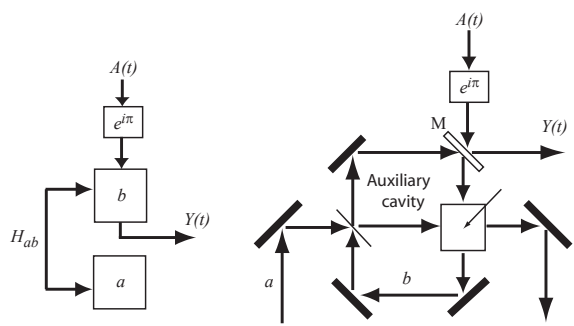}}
\caption{Scheme for (approximate) implementation of a coupling $L=\tilde \alpha a + \tilde \beta a^*$
to cavity mode $a$ using an auxiliary cavity $b$ (whose dynamics is adiabatically eliminated), a
two-mode squeezer, and a beam splitter with the appropriate parameters. The left figure is a block
diagram showing the fast mode $b$ interacting with the slow mode $a$ via the direct interaction
Hamiltonian $H_{ab}$, implemented by the two-mode squeezer and the beam splitter, and also interacting
with a $180^{\circ}$ phase shifted input field $A(t)$ to produce the output field $Y(t)$. The right
figure details the physical implementation of the block diagram.\label{fig:diss-coup-1}}
\end{figure}
In this scheme, additional mirrors are used to implement an auxiliary cavity mode $b$ of the same
frequency as the reference frequency $\omega_r$ (cf.\ Remark \ref{rem:ref-frame}). The auxiliary cavity
$b$ interacts with $a$ via a cascade of a two-mode squeezer and a beam splitter. The combination of the
nonlinear crystal and beam splitter implements an overall interaction Hamiltonian $H_{a b}$, in a
rotating frame at frequency $\omega_r$ (equal to half the pump frequency of the two-mode squeezer), of
the form
\begin{eqnarray}
H_{a b}=\frac{i}{2}(\epsilon_1 a^*b^*-\epsilon_1^* a b) +
\frac{i}{2} (\epsilon_2 a^* b - \epsilon_2^* a b^*),
\label{eq:tms-bs}
\end{eqnarray}
where $\epsilon_1$ is the effective pump intensity of the two-mode squeezer and $\epsilon_2$ is given
by $\epsilon_2=2\Theta e^{-i\Phi}$, where $\Theta$ is the mixing angle of the beam splitter and $\Phi$
is the relative phase introduced between the input fields by the beam splitter. Assuming that the coupling
coefficient $\gamma_2$ of the partially transmitting mirror $M$ on $b$ is such that $b$ is heavily damped
compared to $a$, $b$ will have much faster dynamics than $a$ and thus allows one to adiabatically eliminate
$b$ to obtain a reduced dynamics for $a$ only. A rigorous foundation for such adiabatic elimination or
singular perturbation procedure has recently been developed in \cite{BvHS07}. Based on this theory, the
adiabatic elimination results developed in Appendix \ref{sec:App-A} show that, after the  elimination of $b$,
the resulting coupling operator to $a$ will be given by
$$
L=\frac{1}{\sqrt{\gamma_2}} (-\epsilon_2^* a+\epsilon_1 a^*).
$$
Therefore, it becomes clear that by choosing the parameters $\epsilon_1,\epsilon_2,\gamma_2$ with
$\gamma_2$ large and such that
\begin{equation}
\tilde \alpha=-\frac{\epsilon_2^*}{\sqrt{\gamma_2}}\;\hbox{and}\;
\tilde \beta=\frac{\epsilon_1}{\sqrt{\gamma_2}},
\label{eq:coup-cond}
\end{equation}
it is possible to approximately implement any coefficients $\tilde \alpha$ and $\tilde \beta$ in a
linear coupling operator $L=\tilde \alpha a+\tilde \beta a^*$. Note that a $\pi$ radian phase shifter
in front of $A$ in Figure \ref{fig:diss-coup-1} is required to compensate for the scattering term in
the unitary model that is obtained after adiabatic elimination (cf.\ Appendix \ref{sec:App-A}).

Moreover, for the special case where $\tilde \alpha,\tilde \beta$ satisfy $\tilde \alpha$ is real and
$\tilde \alpha>|\tilde \beta|\geq 0$ we also propose an alternative implementation of the linear coupling
based on preprocessing and postprocessing with squeezed bosonic fields (see Appendix \ref{sec:App-B} for details).
To this end, let $\gamma= \tilde\alpha^2-|\tilde\beta|^2>0$, and consider the interaction Hamiltonian
\begin{eqnarray*}
H_{Int}(t) &=& i(L\eta(t)^*-L^*\eta(t)) \\
&=&i((\tilde\alpha a+\tilde\beta a^*)\eta(t)^*-(\tilde\alpha
a^*+\tilde\beta^* a)\eta(t)).
\end{eqnarray*}
Let us rewrite this Hamiltonian as follows:
\begin{eqnarray*}
H_{Int}(t) &=& i(a(\tilde\alpha
\eta(t)^*-\tilde\beta^*\eta(t))- a^*(\tilde\alpha \eta(t)-\tilde\beta \eta(t)^*))\\
&=&i\sqrt{\gamma}(a\eta'(t)^*-a^*\eta'(t)),
\end{eqnarray*}
where $\eta'(t)=\frac{1}{\sqrt{\gamma}}(\tilde\alpha \eta(t)-\tilde\beta \eta(t)^*)$. Letting
$Z(t)=\int_{0}^{t}\eta'(s)ds$, we have that $Z(t)=\frac{1}{\sqrt{\gamma}}(\tilde \alpha A(t) -\tilde \beta
A(t)^*)$, and
$$
\left[\begin{array}{c} Z(t)\\  Z(t)^*
\end{array}\right]=Q\left[\begin{array}{c}  A(t)\\  A(t)^*
\end{array}\right],\quad Q=\left[\begin{array}{cc} \frac{\tilde \alpha}{\sqrt{\gamma}}  & -\frac{\tilde
\beta}{\sqrt{\gamma}} \\ -\frac{\tilde \beta^*}{\sqrt{\gamma}} &
\frac{\tilde\alpha}{\sqrt{\gamma}}
\end{array}\right].
$$
The main idea is that instead of considering an oscillator interacting with $A(t)$, we consider the same
oscillator interacting with the new field $Z(t)$ via the interaction Hamiltonian
$H_{Int}(t)=i\sqrt{\gamma}(a \eta'(t)^*-a^* \eta'(t))$. Since $ \tilde \alpha ^2-|\tilde \beta|^2=\gamma>0$,
we see that $(\alpha/\sqrt{\gamma})^2-|\beta/\sqrt{\gamma}|^2=1$, from which it follows that $Q$ is a
quasi-unitary linear transformation (cf.\ section \ref{sec:static}) that preserves the field commutation
relations. In fact, $Z(t)$ by definition is a squeezed version of $A(t)$ that can be obtained from $A(t)$
by passing the latter through a squeezer with the appropriate parameters (cf.\ section \ref{sec:squeezer});
in this case the squeezer would have the parameters $s=-{\rm arccosh}(\tilde \alpha/\sqrt{\gamma})$ and
$\theta=\arg{\tilde \beta}$. $Z(t)$ satisfies $[dZ(t),dZ(t)^*]=dt$ and the Ito rules for a squeezed field
that can be generated from the vacuum (the theoretical basis for these manipulations are discussed in
Appendix \ref{sec:App-B}) are
$$
\left[\begin{array}{c} dZ(t)\\ dZ(t)^*
\end{array}\right]\left[\begin{array}{cc} dZ(t)& dZ(t)^*
\end{array}\right]=Q\left[\begin{array}{cc} 0 & 1  \\ 0 & 0
\end{array}\right]Q^Tdt.
$$
$H_{Int}$ can be implemented in one arm of a ring cavity with a fully reflecting mirror M and a partially
transmitting mirror M' with coupling coefficient $\gamma$, with $Z(t)$ incident on M'. After the interaction,
an output field $Z_{out}(t)$ is reflected by M' given by
\begin{eqnarray*}
Z_{out}(t)&=&U(t)^* Z(t) U(t)\\
&=&\frac{\tilde \alpha}{\sqrt{\gamma}} U(t)^*A(t)
U(t)-\frac{\tilde \beta}{\sqrt{\gamma}} U(t)^*A(t)^*U(t).
\end{eqnarray*}
However, the actual output that is of interest is the output $Y(t)=U(t)^*A(t)U(t)$ when the oscillator
interacts directly with the field $A(t)$. To recover $Y(t)$ from $Z_{out}(t)$, notice that since $Q$
is a quasi-unitary transformation, it has an inverse $Q^{-1}$ which is again quasiunitary. Hence $Y(t)$
can be recovered from $Z_{out}(t)$ by exploiting the following relation that follows directly from the
fact that $(Z_1(t),Z_1(t)^*)^T=Q(A_1(t),A_1(t)^*)^T$:
$$
\left[\begin{array}{c} Y(t) \\ Y(t)^*\end{array}\right]=Q^{-1}\left[\begin{array}{c} Z_{out}(t) \\
Z_{out}(t)^*\end{array}\right].
$$
That is, $Y(t)$ is the output of  a squeezer that implements the quasi-unitary transformation $Q^{-1}$
with $Z_{out}(t)$ as its input field. The complete implementation of this linear coupling is shown in
Figure~\ref{fig:diss-coup-2}.

\begin{figure}[tbp!]
\centerline{\includegraphics[width=13.5pc]{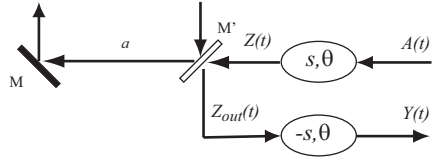}}
\caption{Scheme for implementation of a coupling $L=\tilde \alpha a+\tilde \beta a^*$ with
$\tilde \alpha>0$ and $\tilde \alpha> |\tilde \beta|$. Here $s=-{\rm  arccosh}(\tilde
\alpha/\sqrt{\gamma})$, $\theta=\arg(\tilde \beta)$ and the mirror M' has coupling coefficient
$ \gamma =\sqrt{\tilde \alpha^2 -|\tilde \beta|^2}$.\label{fig:diss-coup-2}}
\end{figure}

\subsection{Engineering the interactions between one-dimensional open quantum harmonic oscillators}\label{sec6.4}
The second necessary ingredient to synthesizing a general generalized open oscillator according to
Theorem \ref{th:synthesis} is to be able to implement a direct interaction Hamiltonian $H^d$ given by
(\ref{eq:multi-direct}) between one-dimensional harmonic oscillators. The only exception to this,
where field-mediated interactions suffice, is in the fortuitous instance where $R_{jk}$ and $L_j$ and
$S_j$, $j,k=1,\ldots,n$, are such that $H^d=0$. The Hamiltonian $H^d$ is essentially the sum of direct
interaction Hamiltonians between pairs of one-dimensional harmonic oscillators of the form
$H_{kl}=x_k^T C_{kl}x_l$ ($k \neq l$) with $C_{kl}$ a real matrix. Under the assumption that the time
it takes for the light in a ring cavity to make a round trip is much faster than the time scales of all
processes taking place in the ring cavity (i.e., the cavity length should not be too long), it will be
sufficient for us to only consider how to implement $H_{kl}$ for any two pairs of one-dimensional harmonic
oscillators and then implementing all of them simultaneously in a network. To this end, let
$a_j=(p_j + i q_j)/2$ and $a_j^*=(p_j-i q_j)/2$ for $j=k,l$, and rewrite $H_{kl}$ as
$$
H_{kl}=\epsilon_1 a_k^*a_l + \epsilon_1^* a_k a_l^*+
\epsilon_2 a_k^* a_l^* + \epsilon_2^* a_k a_l
$$
for some complex numbers $\epsilon_1$ and $\epsilon_2$. The first part
$H_{kl}^{1}= \epsilon_1 a_k^*a_l + \epsilon_1^* a_k a_l^*$ can be simply implemented by a beam splitter
with a mixing angle $\Theta=|\epsilon_1|$, $\Phi=-\arg(\epsilon_1)+\frac{\pi}{2}$, $\Psi=-\Phi$, and
$\Xi=0$ (see section \ref{sec:BS}). On the other hand, the second part
$H_{kl}^{2}= \epsilon_2 a_k^* a_l^* + \epsilon_2^* a_k a_l $ can be implemented by having the two modes
$a_k$ and $a_l$ interact in a suitable $\chi^{(2)}$ nonlinear crystal using a classical pump beam of
frequency $2\omega_r$ and effective pump intensity $-2i\epsilon_2$ in a two-mode squeezing process as
described in section \ref{sec:two-mode}. The overall Hamiltonian $H_{kl}$ can be achieved by positioning
the arms of the two ring cavities (with canonical operators $x_k$ and $x_l$) to allow their circulating
light beams to ``overlap'' at two points where a beam splitter and a nonlinear crystal are placed to
implement $H_{kl}^1$ and $H_{kl}^2$, respectively. An example of this is scheme is depicted in
Figure \ref{fig:cav-coup}.

\begin{figure}[tbp!]
\centerline{\includegraphics[width=12pc]{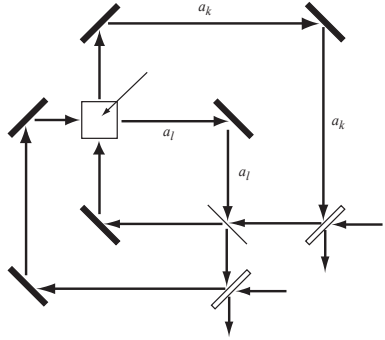}}
\caption{Example implementation of the total direct interaction $H_{kl}=H_{kl}^1+H_{kl}^2$ between
the modes $a_k$ and $a_l$ of two ring cavities.\label{fig:cav-coup}}
\end{figure}

\section{Illustrative synthesis example}\label{sec:example}
Consider a two degrees of freedom open oscillator $G$ coupled to a single external bosonic noise
field $A(t)$ given by $G=(I_{4 \times 4},Kx,x^T \diag(R_1,R_2)x)$, with $x=(q_1,p_1,q_2,p_2)^T$,
$K=[\begin{array}{cccc} 3/2 & 1/2i & 1 & i \end{array}]$,
\mbox{\scriptsize$R_1=\left[\begin{array}{@{}cc@{}} 2 & 0.5 \\
0.5 & 3 \end{array}\right]$}, and $R_2=\mbox{\scriptsize$\left[\begin{array}{@{}cc@{}} 1 & 0 \\ 0 & 1 \end{array}\right]$}$.

Let $G_1$ and $G_2$ be two independent one degree of freedom open oscillators given by
$G_1=(I_{2 \times 2},K_1 x_1,\half x_1^T R_{1} x_1)$ and
$G_2=(I_{2 \times 2},K_2x_2,\half x_2^T R_{2} x_2)$, with
$x_1=(q_1,p_1)^T$, $x_2=(q_2,p_2)^T$, $K_1=[\begin{array}{cc} 3/2
& i/2 \end{array}]$, and $K_2=[\begin{array}{cc} 1 & i\end{array}]$. Since the scattering matrix for
$G$ is an identity matrix, it follows from Theorem \ref{th:synthesis} that $G$ may be constructed as
a reducible network given by $G=\{\{G_1,G_2\}, H^d_{12},G_2 \triangleleft G_1 \}$ with the direct
interaction Hamiltonian $H^d_{12}$ between $G_1$ and $G_2$ given by (cf.\ (\ref{eq:multi-direct}))
\begin{eqnarray*}
H^d_{12}&=& -\frac{1}{2i} x_2^T (K_2^{\dag}K_1-K_2^T K_1^{\#})x_1\\
&=& \frac{1}{2}x_2^T \left[\begin{array}{cc} 0 & -1
\\ 3 & 0 \end{array}\right] x_1.
\end{eqnarray*}
This network is depicted in Figure \ref{fig:G1G2network}.

In the following we shall illustrate how to build $G_1$ and $G_2$ and how $H^d_{12}$ can be implemented
to synthesize the overall system $G$.

\subsection{Synthesis of \boldmath $G_1$ and $G_2$\unboldmath}\label{sec7.1}
Let us now consider the synthesis of $G_1=(I_{2 \times 2},K_1 x_1, \half x_1^T R_1 x_1)$. From the
discussion in section \ref{sec:one-degree-syn}, $R_1=\mbox{\scriptsize$\left[\begin{array}{@{}cc@{}} 2 & 0.5 \\ 0.5 & 3
\end{array}\right]$}$ can be realized as a DPA with parameters $\Delta=5$ and $\epsilon=1+i$, while
the coupling operator $L_1=K_1 x_1$ can be realized by the first scheme proposed in section
\ref{sec:one-degree-syn} and shown in Figure \ref{fig:diss-coup-1} by the combination of a two-mode
squeezer, a beam splitter, and an auxiliary cavity mode. Suppose that the coupling coefficient of
the mirror $M$ is $\gamma_2=100$; then the effective pump intensity of the two-mode squeezer is set
to be $10$ and the beam splitter should have a mixing angle of $-10$ with all other parameters
equal to $0$. Overall, the open oscillator $G_1$ with Hamiltonian $H_1=\half x_1^T R_1 x_1$ and
coupling operator $L_1$ can be implemented around a ring cavity structure, as shown in
Figure \ref{fig:G1}.  The open oscillator $G_2$ can be implemented in a similar way to $G_1$. The
Hamiltonian $H_2=\half x_2^T R_2 x_2$ can be implemented in the same way as $H_1$ with the choice
$\Delta=2$ and $\epsilon=0$. Since $\epsilon=0$, this means no optical crystal and pump beam are
required to implement $R_2$, but it suffices to have a cavity that is detuned from $\omega_r$, the
reference frequency in Remark \ref{rem:ref-frame}, by an amount $\Delta=2$. The coupling operator
$L_2=q_2+ip_2=2a_2$, where $a_2$ is the annihilation operator/cavity mode of cavity is standard
and can be implemented simply with a partially transmitting mirror with coupling coefficient
$\kappa=4$, on which an external vacuum noise field $A_2(t)$ interacts with the cavity mode $a_2$
to produce an outgoing field $Y_2(t)$. The implementation of $G_2$ is shown in
Figure~\ref{fig:G2}.

\begin{figure}[tbp!]
\centerline{\includegraphics[width=9pc]{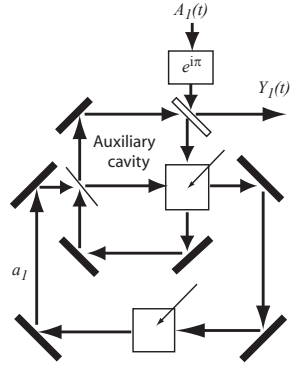}}
\caption{Realization of $G_1$.\label{fig:G1}}
\end{figure}

\begin{figure}[tbp!]
\centerline{\includegraphics[width=7.5pc]{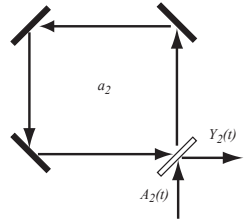}}
\caption{Realization of $G_2$.\label{fig:G2}}
\vspace{-9pt}
\end{figure}

\subsection{Synthesis of \boldmath $H^d_{12}$\unboldmath}\label{sec7.2}
We now consider the implementation of the direct interaction Hamiltonian $H_{12}^d$ given by
$H_{12}^d=\frac{1}{2}x_2^T \mbox{\scriptsize$\left[\begin{array}{@{}cc@{}} 0 & -1 \\ 3 & 0 \end{array}\right]$} x_1.$
To proceed, we first note that $H^d_{12}$ may be reexpressed in terms of the cavity modes $a_1$
and $a_2$ as $H^d_{12}=-i(a_1^* a_2-a_1a_2^*)+2i(a_1^* a_2^*-a_1a_2)$. Define
$H^d_{12,1}=-i(a_1^* a_2-a_1a_2^*)$ and $H^d_{12,2}=2i(a_1^* a_2^*-a_1 a_2)$ so that
$H^d_{12}=H^d_{12,1}+H^d_{12,2}$. The first part $H^d_{12,1}= -i(a_1^* a_2-a_1a_2^*)$ can be
simply implemented as a beam splitter with a rotation/mixing angle $\Theta=-1$ and all other
parameters set to $0$ (cf.\ section \ref{sec:BS}). On the other  hand, the second part
$H^d_{12,2}=2i(a_1^* a_2^*-a_1a_2)$ can be implemented by having the two modes $a_k$ and $a_l$
interact in a suitable $\chi^{(2)}$ nonlinear crystal using a classical pump beam of frequency
$\omega_p=2\omega_r$ and effective intensity $\epsilon=4$.

\subsection{Complete realization of \boldmath $G=\{\{G_1,G_2\},H_{12}^d,G_2\triangleleft G_1\}$\unboldmath}\label{sec7.3}
The overall two degrees of freedom open oscillator $G$ can now be realized by (i) positioning the
arms of the two (ring) cavities of $G_1$ and $G_2$ to allow their internal light beams to ``overlap''
at two points where a beam splitter and a nonlinear crystal are placed to implement $H_{12,1}^d$
and $H_{12,2}^d$, respectively, and (ii) passing the output $Y_1(t)$ of $G_1$ as input to $G_2$.
This implementation is shown in Figure \ref{fig:Grealization}.

\begin{figure}[tbp!]
\centerline{\includegraphics[width=17pc]{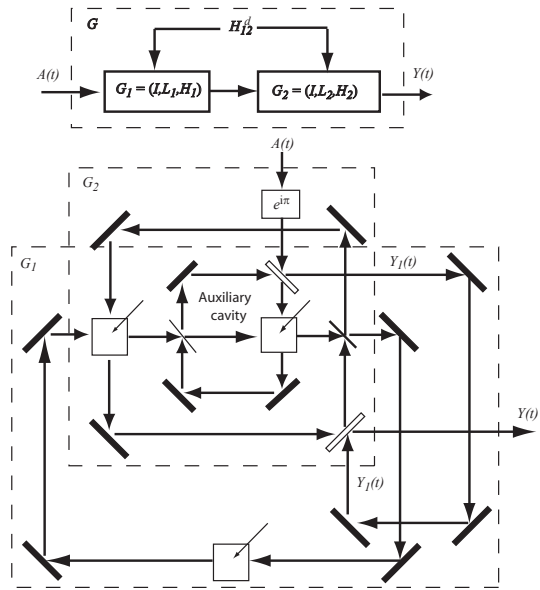}}
\caption{Realization of $G$. The block diagram at the top shows how $G$ is realized by a series
connection of $G_1$ into $G_2$ and a bilinear direct interaction $H^d_{12}$ between the canonical
operators of $G_1$ and $G_2$. The bottom figure shows the physical implementation of $G$ based on
the block diagram.\label{fig:Grealization}}
\end{figure}

\section{Conclusions}\label{sec:conclude}
In this paper we have developed a network theory for synthesizing arbitrarily complex linear dynamical
quantum stochastic systems from one degree of freedom open quantum harmonic oscillators in a systematic
way. We also propose schemes for building the one degree of freedom oscillators and the required
interconnections and interactions among them, in the setting of quantum optics.

Together with advances in experimental physics and the availability of high-quality basic quantum
devices, it is hoped the results of this work will assist in the construction of high-performance coherent
linear quantum stochastic controllers and linear photonic circuits in the laboratory for applications in
quantum control and quantum information science.

\appendix
\def\Hcal{\mathcal{H}}
\def\Fcal{\mathcal{N}}
\def\Dcal{\mathcal{D}}
\def\l2{l^2(\mathbb{Z}_+)}

\section*{Adiabatic elimination of coupled cavity modes}\label{sec:App-A}
In this section, we shall derive formulas for two coupled cavity modes in which one of the cavities has
very fast dynamics compared to the other and can be adiabatically eliminated, leaving only the dynamics
of the slow cavity mode. The cavities are each coupled to separate bosonic fields and are interacting
with one another in a classically pumped nonlinear crystal. A mathematically rigorous theory for the
type of adiabatic elimination/singular perturbation that we are interested in here has recently been
developed in \cite{BvHS07}.

The two cavity modes will be denoted by $a$ and $b$, each defined on two distinct copies of the
Hilbert space $\l2$ of square-integrable sequences ($\mathbb{Z}_+$ denotes the set of all nonnegative
integers). Thus the composite Hilbert space for the two cavity modes is $\Hcal=l^2(\mathbb{Z}_+)
\otimes l^2(\mathbb{Z}_+)$. The interaction in a nonlinear crystal is given, in some rotating frame,
by an interaction Hamiltonian $H_{ab}$ of the form $H_{ab}=\alpha a^*b+\beta a^*b^*+ \alpha^*
ab^*+\beta^* ab$ for some complex constants $\alpha$ and $\beta$. The mode $a$ is coupled to a
bosonic field $A_1$, while $b$ is coupled to the bosonic field $A_2$, both fields in the vacuum
state. The fields $A_1$ and $A_2$ live, respectively, on boson Fock spaces $\mathcal{F}_1$ and
$\mathcal{F}_2$, and we denote $\mathcal{F}=\mathcal{F}_1 \otimes \mathcal{F}_2$. We take $a$ to
be the slow mode to be retained and $b$ to be the fast mode to be eliminated.

We consider a sequence\vspace*{-1.5pt} of generalized open oscillators $G_k=(I,\tilde L^{(k)},H_{ab}^{(k)})$, with
$\tilde L^{(k)}=(\sqrt{\gamma_1} a,k\sqrt{\gamma_2}b)^T$ and
$H_{ab}^{(k)}=\Delta_1 a^*a + k^2\Delta_2 b^*b+ k(\alpha a^*b+\beta a^*b^*+ \alpha^* ab^*+\beta^* ab)$
each evolving according to the unitary $U_k$ satisfying the left H-P QSDE (as
opposed to the right  H-P QSDE in (\ref{eq:qsde-1})):
\begin{eqnarray*}
dU_k(t)&=&U_k(t)\bigg( \tilde L^{(k)\dag}  (dA_1(t),dA_2(t))^T-
\tilde L^{(k)T}
(dA_1(t),dA_2(t))^{\dag}+iH_{ab}^{(k)}\nonumber \\
&&\quad\hspace*{25pt} -\;\frac{1}{2}\tilde L^{(k){\dag}} \tilde
L^{(k)}dt\bigg).
\end{eqnarray*}
Here we are using the left QSDE following the convention used in \cite{BvHS07} (see Remark 2 therein)
so that the interaction picture dynamics of an operator $x$ is given by $x(t)=U_k(t)xU_k(t)^*$. We shall
use the results of \cite{BvHS07} to show, in a similar treatment  to section 3.2 therein, that in the
limit as $k \rightarrow \infty$:
\begin{eqnarray}
\lim_{k \rightarrow \infty}\sup_{0 \leq t \leq
T}\|U_k(t)^*\phi-U(t)^* \phi\|=0\;\; \forall \phi \in \Hcal_0
\otimes \mathcal{F} \label{eq:qsde-conv1}
\end{eqnarray}
for any fixed time $T>0$, where $\mathcal{H}_0$ is an appropriate Hilbert subspace of $\Hcal$ (to be
precisely specified in the next paragraph) for a limiting unitary $U(t)$ (again as a left H-P
QSDE) satisfying
\begin{eqnarray}
\label{eq:elim-eq1}
dU(t)&=&U(t)\left(\left(\frac{i2\Delta_2+\gamma_2}{i2\Delta_2-\gamma_2}-1\right)d\Lambda_{22}
+\sqrt{\gamma_1}a^*dA_1(t)-\sqrt{\gamma_1}adA_1(t)^* \right.\\
&&
\hspace*{32pt}\vphantom{\left(\frac{\gamma_2+i\Delta_2}{-\gamma_2+i\Delta_2}-1\right)}
-i\sqrt{\gamma_2}\left(i\Delta_2 - \frac{\gamma_2}{2} \right)^{-1}(\alpha
a^*+\beta^*
a)dA_2(t)\nonumber\\
&& \hspace*{34pt}+\;i\frac{2\sqrt{\gamma_2}}{i2\Delta_2-\gamma_2}(\alpha^* a
+ \beta a^*)dA_2(t)^*+ \left(i\Delta_1-\frac{\gamma_1}{2}\right)a^*a dt\nonumber\\
&&\hspace*{31pt} \left.
\vphantom{\left(\frac{\gamma_2+i\Delta_2}{-\gamma_2+i\Delta_2}\right)}
+\left(i\Delta_2-\frac{\gamma_2}{2}\right)^{-1}(\alpha a^*+\beta^*
a)(\alpha^*a+\beta a^*)dt \right) \nonumber
\end{eqnarray}
on $\Hcal_0 \otimes \mathcal{F}$. Note that (\ref{eq:elim-eq1}) is a left H-P QSDE corresponding to
the right form in section \ref{sec:models} by noting that we may write
\begin{eqnarray*}
\lefteqn{\left(i\Delta_1-\frac{\gamma_1}{2}\right)a^*a+
\left(i\Delta_2-\frac{\gamma_2}{2}\right)^{-1}(\alpha a^*+\beta^*
a)(\alpha^*a+\beta a^*)}\\
&=&i\left(\Delta_1
a^*a-\frac{\Delta_2}{\Delta_2^2+(\frac{\gamma_2}{2})^2}(\alpha
a^*+\beta^* a)(\alpha^*a+\beta a^*)\right)-\frac{1}{2}(\tilde
L_1^{\dag} \tilde L_1+\tilde L_2^{\dag} \tilde L_2),
\end{eqnarray*}
with $\tilde L_1=\sqrt{\gamma_1}a$ and $\tilde L_2=i\sqrt{\gamma_2}(-i\Delta_2 - \frac{\gamma_2}{2}
)^{-1}(\alpha^* a +\beta a^*)$. As such, it satisfies the H-P Condition 1 of \cite{BvHS07}.

Let $\phi_0,\phi_1,\ldots$ be the standard orthogonal bases of $\l2$, i.e., $\phi_l$ is an infinite
sequence (indexed starting from $0$) of complex numbers with all zeros except a $1$ in the $l$th
place. First, let us specify that $\Hcal_0=\l2 \otimes \Cbb \phi_0$;  this is the subspace of $\Hcal$
where the slow dynamics of the system will evolve. Next, we define a dense domain
$\Dcal={\rm span}\{\phi_j \otimes \phi_l;\,j,l=0,1,2,\ldots \}$ of $\Hcal$. The strategy is
to show that \cite[Assumptions 2--3]{BvHS07} are satisfied, from which the desired result will
follow from \cite[Theorem 11]{BvHS07}.

From the definition of $H_{ab}^{(k)}$, $\tilde L^{(k)}$ and $U_k$ given above, and we can define the
operators $Y,A,B,G_1,G_2$, and $W_{jl}\;(j,l=1,2)$ in \cite[Assumption 1]{BvHS07} as follows:
$Y=(i\Delta_2-\frac{\gamma_2}{2})b^*b,A=i(\alpha a^*b+ \alpha^* ab^* + \beta a ^*b^* + \beta^*
ab),B=(i\Delta_1-\frac{\gamma_1}{2}) a^*a, G_1=\sqrt{\gamma_1}a^*,G_2=0,F_1=0,F_2=\sqrt{\gamma_2}b^*,
W_{jl}=\delta_{jl}$. Then we can define the operators $K^{(k)}$, $L_j^{(k)}$ in this assumption as
\begin{eqnarray*}
K^{(k)}&=&k^2 Y+kA+B, \quad L_j^{(k)}=kF_j+G_j\;(j=1,2).
\end{eqnarray*}

Let $P_0$ be the projection operator to $\Hcal_{0}$. Let us now address Assumption 2 of \cite{BvHS07}. From our
definition of $\Hcal_0$, it is clear that we have that (a) $P_0\Dcal \subset \Dcal$. Any element
of $P_0 \Dcal$ is of the form $f \otimes \phi_{0}$ for some $f \in {\rm span}\{\phi_l;\,l=0,1,2,\ldots\}$;
therefore, since $Y=(i\Delta_2-\frac{\gamma_2}{2})b^*b$ and $b \phi_{0}=0$, we find that (b)
$YP_0 d=0\ \forall\ d \in \Dcal$. Define the operator $\tilde Y$ on $\Dcal$ defined by
$\tilde Y f \otimes \phi_0=0$ and $\tilde Y f \otimes \phi_l= l^{-1}(i\Delta_2-\frac{\gamma_2}{2})^{-1}
f \otimes \phi_l$ for $l=1,2,\ldots$ ($\tilde Y$ can then be defined to all of $\Dcal$ by linear
extension). From the definition of $Y$ and $\tilde Y$, it is easily inspected that (c1)
$Y\tilde Y f=\tilde Y Y f=P_1 f\ \forall\ f \in \Dcal$, where $P_1=I-P_0$ (i.e., the projection
onto the subspace of $\Hcal$ complementary to $\Hcal_0$). Moreover, because of the simple form of
$\tilde Y$, it is also readily inspected that (c2) $\tilde Y$ has an adjoint $\tilde Y^*$ with a
dense domain that contains $\Dcal$. Since $F_1=0$, we have that (d1) $F_1^* P_0=0$ on $\Dcal$, while
since $F_2^* f \otimes \phi_0 = \sqrt{\gamma_2}b f \otimes \phi_0 = 0\;\forall f \in \l2$, we also
have (d2) $F_2^* P_0 =0$ on $\Dcal$. Finally, from the expression for $A$ and the orthogonality of
the bases $\phi_0,\phi_1,\ldots$, a little algebra reveals that (e) $P_0 A P_0 d =0\ \forall\ d \in \Dcal$. From (a), (b), (c1--c2), (d1--d2), and (e),  we have now verified that Assumption 2
of \cite{BvHS07} is satisfied.

Finally, let us check that the limiting operator coefficients
$K,L_1,L_2,M_1, M_2$, and $N_{jk}\;(i,j=1,2)$ (as operators on $\Hcal_0$) of Assumption 3 of \cite{BvHS07} coincide with
the corresponding coefficients of (\ref{eq:elim-eq1}). These operator coefficients are defined as
$K=P_0(B-A \tilde Y A)P_0$, $L_j=P_0(G_j-A \tilde Y F_j)P_0$, $M_j=-\sum_{r=1}^{2}P_0 W_{jr}
(G_r^*-F_r^*\tilde Y A)P_0$, and $N_{jl}=\sum_{r=1}^{2}P_0 W_{jr}(F_r^*\tilde Y F_l +
\delta_{rl})P_0$. From these definitions and some straightforward algebra, we find that for all
$f \in {\rm span}\{\phi_l;\,l=0,1,2,\ldots\}$
\begin{eqnarray*}
K f \otimes \phi_0
&=&\left(\left(i\Delta_1-\frac{\gamma_1}{2}\right)a^*a+\left(i\Delta_2-\frac{\gamma_2}{2}\right)^{-1}(\alpha
a^*+\beta^* a)(\alpha^*a+\beta a^*) \right) f
\otimes \phi_0,\\
L_1 f \otimes \phi_0&=&\sqrt{\gamma_1}a^* f \otimes \phi_0,\\
L_2 f \otimes \phi_0 & = & -i\sqrt{\gamma_2}\left(i\Delta_2 -
\frac{\gamma_2}{2}\right)^{-1}(\alpha a^*+\beta^* a) f \otimes \phi_0,\\
M_1 f \otimes \phi_0 & = & -\sqrt{\gamma_1}a f \otimes \phi_0,\\
M_2 f \otimes \phi_0 & = &
\sqrt{\gamma_2}\left(i\Delta_2-\frac{\gamma_2}{2}
\right)^{-1}(\alpha^* a + \beta a^*) f \otimes \phi_0,
\end{eqnarray*}
and
\begin{eqnarray*}
&& N_{11}f \otimes \phi_0 = f \otimes \phi_0,\quad N_{12}f \otimes
\phi_0 =0,\quad N_{21}f \otimes \phi_0=0,\\
&& N_{22}f \otimes
\phi_0=\frac{\gamma_2+i2\Delta_2}{-\gamma_2+i2\Delta_2}f \otimes
\phi_0.
\end{eqnarray*}
Therefore, we see that $U(t)$ may be written as
$$
dU(t)=U(t)\left(\sum_{j,l=1}^{2}(N_{jl}-\delta_{jl})d\Lambda_{jl}
+\sum_{j=1}^{2}
 M_{j}dA_j^*+\sum_{j=1}^{2}L_j dA_j +K dt\right).
$$
Since we have already verified that (\ref{eq:elim-eq1}) is a bona fide right-QSDE equation, it now
follows that Assumption 3 of \cite{BvHS07} is satisfied. Now (\ref{eq:qsde-conv1}) follows from
\cite[Theorem 11]{BvHS07}, and the proof is complete.

Moreover, we can observe from the derivation above that the coupling of $a$ to $A_2(t)$ after
adiabatic elimination will not change if $a$ is also coupled to other cavities modes
$b_3,\ldots,b_m$ via an interaction Hamiltonian of the form
$\sum_{i=j}^{m}(\alpha_{j1} ab_j^*+ \alpha_{j1}^* a^*b_j + \alpha_{j2} a^*b_j^* + \alpha_{j2}^* a b_j)$,
and each additional mode may also linearly coupled to distinct bosonic fields $A_3,\ldots,A_m$,
respectively, as long as these other modes are {\em not} interacting with $b$ and with one another
(this amounts to just introducing additional operators $F_j,G_j$, $j \geq 3$, etc.). Moreover, under
these conditions one can also adiabatically eliminate any of the additional modes, and the only effect
will be the presence of additional sum terms in $U(t)$ that do not involve $b$, $A_1(t)$, and $A_2(t)$.

\def\Fock{\mathcal{F}(L^2(\mathbb{R}))}
\def\Rbb{ \mathbb{R} }
\def\Cbb{ \mathbb{C} }
\def\Fcal { \mathcal{F} }
\def\L2R{L^2(\Rbb)}

\section*{Squeezed white noise calculus}\label{sec:App-B}
The purpose of this appendix is to briefly recall results from the theory squeezed white noise
calculus \cite{HHKKR02} that are relevant as a basis for some formal calculations presented in
section \ref{sec:one-degree-syn}. As the theory is quite involved, it is not our intention here
to discuss any aspects of it in detail, but instead to point the reader to specific results of
\cite{HHKKR02}.

Let $\Fock$ denote the usual (symmetric) boson Fock space over the Hilbert space $L^2(\Rbb)$ of
complex-valued square integrable functions on $\Rbb$. Let $\Omega_{\Fcal}$ be the Fock vacuum
vector, and let $a_0(f)$ and $a_0^*(g)$ for $f,g \in \L2R$ be the vacuum creation and annihilation
operators on $\Fock$, respectively. Let $n \in \Rbb$ and $c\in \Cbb$ satisfy $n \geq 0$,
$n \geq 0$, and $n(n+1) \geq c$. The parameters $n,c$ characterize the so-called squeezed white
noise states $\omega_{n,c}$ \cite[section 2.1]{HHKKR02} that are postulated to satisfy the properties
(2.1)--(2.5) therein (see also \cite[Chapter 10]{GZ00}). However, here we will only be interested in
the special case of squeezed states with $n,c$ satisfying the constraint $n(n+1)=|c|^2$, as this is
the special case of squeezed states that can be generated from the vacuum state $\omega_0$ by an
appropriate squeezing Bogoliubov transformation \cite[equation (2.16)]{HHKKR02}; see \cite[Theorem  2.3]{HHKKR02}.
It has been shown that the annihilation and creation operators $a_{n,c}(f)$ and $a_{n,c}^*(g)$
($f,g \in \L2R$) corresponding to such a squeezed states can be concretely realized as operators on
$\Fock$ \cite[Theorem 2.11 part (b)]{HHKKR02} and  are given in terms of the vacuum creation and
annihilation operators $a_0(f)$ and $a_0(g)$, as (this follows from \cite[Theorem 2.3 and equation (3.12)]{HHKKR02})
\begin{eqnarray*}
a_{n,c}(f)&=&\cosh(s)a_0(f)+e^{i\theta}\sinh(s)a_0^*(Jf),\\
a_{n,c}^*(f)&=&\cosh(s)a_0^*(f)+e^{-i\theta}\sinh(s)a_0(Jf),
\end{eqnarray*}
where $J:f \mapsto f^*$, $s={\rm arctanh}(\frac{2|c|}{2n+1})$, and $\theta=\arg(c)$. Conversely, we have
$n=\frac{1}{2}\cosh(2s)-\frac{1}{2}$ and $c=\frac{1}{2}e^{i\theta}\sinh(2s)$. The squeezed white noise
state $\omega_{n,c}$ acts on an operator $A$ affiliated to the von Neumann algebra
$\Pi_{n,c}(\mathcal{W}(\L2R))''$ of operators on $\Fock$ (here $\Pi_{n,c}(\mathcal{W}(\L2R))$ denotes
the Gelfand--Naimark--Segal representation of the Weyl C*-algebra $\mathcal{W}(\L2R)$ on $\Fock$ corresponding to the state
$\omega_{n,c}$, and $''$ denotes the double commutant) as $$\omega_{n,c}(A)=\langle \Omega_{\Fcal},
A\Omega_{\Fcal}\rangle,$$ where $\langle \cdot,\cdot \rangle$ is the complex inner-product on $\Fock$
(antilinear in the first slot and linear in the second).

Let $A(t)=a(\chi_{[0,t]})$ be a vacuum bosonic field, where $\chi_{[0,t]}$ denotes the indicator
function for the interval $[0,t]$, and define the squeezed bosonic field
$A_{n,c}(t)=a_{n,c}(\chi_{[0,t]})$ with $a_{n,c}$ as defined above. Then $A_{n,c}$ and its adjoint
$A_{n,c}^*$ are related to $A$ and $A^*$ by
\begin{eqnarray}
\label{eq:App-1} A_{n,c}(t)&=&\cosh(s)A(t)+e^{i\theta}\sinh(s)A^*(t),\\
A_{n,c}^*(t)&=&\cosh(s)A^*(t)+e^{-i\theta}\sinh(s)A(t). \nonumber
\end{eqnarray}
Now, consider an open oscillator whose dynamics are given by the H-P QSDE:
\begin{eqnarray}
dU(t)&=& \left(-iH+dA(t)^{\dag}L - L^{\dag}dA(t) - \frac{1}{2} L^{\dag}L
dt \right)U(t), \label{eq:App-4}
\end{eqnarray}
where $H$ is the quadratic Hamiltonian of the oscillator and $L$ is the linear coupling operator to
$A(t)$. By using (\ref{eq:App-1}) and substituting this into the above QSDE, we may rewrite it in
terms of the $A_{n,c}$ and $A_{n,c}^*$ as follows:
\begin{eqnarray}
\label{eq:App-2} dU(t)&=& \bigg(-iH + dA_{n,c}(t)^{\dag} M -
M^{\dag}dA_{n,c}(t)  \\[-3pt]
&&\hspace*{10pt}-\;\frac{1}{2}(nMM^*+(n+1)M^*M-c^*M^2-cM^*M)dt\bigg)U(t),\nonumber
\end{eqnarray}
where $M$ is a new linear coupling operator given by
$$
M=\cosh(s)L+e^{i\theta}\sinh(s)L^*.
$$
As shown in \cite{HHKKR02}, (\ref{eq:App-2}) can be interpreted on its own as the unitary evolution
of a harmonic oscillator and a squeezed bosonic field linearly coupled via the coupling operator
$M$, and this defines a quantum Markov process on the oscillator algebra (by projecting to the
oscillator algebra; see \cite[section 3]{HHKKR02}). In this interpretation of (2), the squeezed bosonic
fields $A_{n,c}$ and $A_{n,c}^*$ satisfy the squeezed Ito multiplication rules given by\enlargethispage{6pt}
\begin{eqnarray*}
&&dA_{n,c}^2=cdt,\quad
dA_{n,c}dA_{n,c}^*=(n+1)dt,\quad
dA_{n,c}^*dA_{n,c}= n,\quad
(dA_{n,c}^*)^2=c^*dt,\\
&&dA_{n,c}dt=0,\quad dA_{n,c}^*dt=0
\end{eqnarray*}
that forms a basis for a quantum stochastic calculus for squeezed bosonic fields. A formal interpretation
of this is that (\ref{eq:App-2}) defines the evolution of a system coupled to $A_{n,c}$ via the formal
interaction Hamiltonian (see \cite[section 3.6]{HHKKR02}):
\begin{eqnarray}
H_{Int}(t)=i(M \eta_{n,c}^*(t)-M^*\eta_{n,c}(t)), \label{eq:App-3}
\end{eqnarray}
where $\eta_{n,c}$ is a squeezed quantum white noise that can be formally written as
$\eta_{n,c}=a_{n,c}(\delta(t)).$\footnote{As is often the case, there is technical caveat in that
for mathematical convenience the results of \cite{HHKKR02} are derived on the assumption that $H$
and $M$ are bounded operators on the oscillator Hilbert space. Here we do not concern ourselves too
much with such detail and assume the optimistic view that these results can be extended to unbounded
coupling operators $M$, which are linear combinations of the canonical operators of the harmonic
oscillator, in view of the fact that  the left form (cf.\ Appendix A) of (\ref{eq:App-4}), from which
the left form of (\ref{eq:App-2}) can be recovered, still makes sense for a quadratic $H$ and the
unbounded operator $L$ associated with $M$ (i.e., $L=\cosh(s)M-e^{-i\theta}\sinh(s)M^*$) \cite{Fagno90}.
Moreover, singular interaction Hamiltonians of the form (\ref{eq:App-3}) between the unbounded canonical
operators of a harmonic oscillator and a vacuum or squeezed quantum white noise are physically
well motivated and widely used in the physics community. See, e.g., \cite[Chapters 5 and 10]{GZ00}
and related references from \cite[section 3.6]{HHKKR02}.} The connection with the discussion in
section \ref{sec:one-degree-syn} is made by identifying the field $Z(t)$ introduced therein with
$A_{n,c}(t)$, and $\eta'(t)$ with $\eta_{n,c}(t)$.\vspace*{-6pt}

\end{document}